\documentclass{article}

\usepackage{hyperref}
\usepackage{verbatim}
\usepackage{amsmath, amsthm, amssymb, graphics, mathtools, graphicx, xcolor}
\usepackage{enumitem}

\newtheorem{theorem}{Theorem}[section]
\newtheorem{lemma}[theorem]{Lemma}
%[section]
%\newtheorem{corollary}[lemma]{Corollary}%[section]

%\theoremstyle{remark}
\newtheorem{claim}{Claim}

\makeatletter
\newtheorem*{rep@theorem}{\rep@title}
\newcommand{\newreptheorem}[2]{%
\newenvironment{rep#1}[1]{%
 \def\rep@title{#2 \ref{##1}}%
 \begin{rep@theorem}}%
 {\end{rep@theorem}}}
\makeatother

\newreptheorem{theorem}{Theorem}
\newreptheorem{corollary}{Corollary}
\newreptheorem{lemma}{Lemma}

\newcommand{\bpc}{\noindent {\em Proof of Claim~\theclaim. }}
\newcommand{\epc}{This proves Claim~\theclaim.}

\newcommand{\sm}{\setminus}

\newcommand{\seq}{\subseteq}

\newcommand{\cc}{\textsc{cc}}

\newenvironment{myproof}[1][\proofname]{%
  \begin{proof}[#1]$ $\par\nobreak\ignorespaces
}{%
  \end{proof}
}

\sloppy

\begin{document}

\title{(Theta, triangle)-free and (even hole, $K_4$)-free graphs. Part 2: Bounds on treewidth}

\author{Marcin Pilipczuk\thanks{Institute of Informatics, University
    of Warsaw Banacha 2, 02-097 Warsaw, Poland\newline This research is a
    part of a project that has received funding from the European
    Research Council (ERC) under the European Union’s Horizon 2020
    research and innovation programme Grant Agreement no. 714704.}~,
  Ni Luh Dewi Sintiari\thanks{Univ Lyon, EnsL, UCBL, CNRS, LIP,
    F-69342, LYON Cedex 07, France. \newline The last three authors
    are partially supported by the LABEX MILYON (ANR-10-LABX-0070) of
    Universit\'e de Lyon, within the program ‘‘Investissements
    d'Avenir’’ (ANR-11-IDEX-0007) operated by the French National
    Research Agency (ANR) and by Agence Nationale de la Recherche (France) under 
    research grant ANR DIGRAPHS ANR-19-CE48-0013-01.}~,\\St\'ephan Thomass\'e\footnotemark[2]~
  and Nicolas Trotignon\footnotemark[2]}

 % \author{Ni Luh Dewi Sintiari\thanks{Univ Lyon, EnsL, UCBL, CNRS,
 %     LIP, F-69342, LYON Cedex 07, France. E-mail:
 %     ni-luh-dewi.sintiari@ens-lyon.fr}~,  Marcin
 %   Pilipczuk\thanks{Somewhere}~,\\St\'ephan Thomass\'e \thanks{Univ
 %     Lyon, EnsL, UCBL, CNRS,  LIP, F-69342, LYON Cedex 07, France.}~, Nicolas
 %   Trotignon\thanks{Univ Lyon, EnsL, UCBL, CNRS,  LIP, F-69342, LYON Cedex 07, France. E-mail:
 %     nicolas.trotignon@ens-lyon.fr.\newline  The authors are partially supported by the
 %     LABEX MILYON (ANR-10-LABX-0070) of Universit\'e de Lyon, within
 %     the program ‘‘Investissements d'Avenir’’ (ANR-11-IDEX-0007)
 %     operated by the French National Research Agency (ANR)}}

\maketitle
\begin{abstract}
  A {\em theta} is a graph made of three internally vertex-disjoint
  chordless paths $P_1 = a \dots b$, $P_2 = a \dots b$,
  $P_3 = a \dots b$ of length at least~2 and such that no edges exist
  between the paths except the three edges incident to $a$ and the
  three edges incident to $b$.  A {\em pyramid} is a graph made of
  three chordless paths $P_1 = a \dots b_1$, $P_2 = a \dots b_2$,
  $P_3 = a \dots b_3$ of length at least~1, two of which have length
  at least 2, vertex-disjoint except at $a$, and such that $b_1b_2b_3$
  is a triangle and no edges exist between the paths except those of
  the triangle and the three edges incident to~$a$.  An \emph{even
    hole} is a chordless cycle of even length.  For three non-negative
  integers $i\leq j\leq k$, let $S_{i,j,k}$ be the tree with a vertex
  $v$, from which start three paths with $i$, $j$, and $k$ edges
  respectively.  We denote by $K_t$ the complete graph on $t$
  vertices.

  We prove that for all non-negative integers $i, j, k$, the class of
  graphs that contain no theta, no $K_3$, and no $S_{i, j, k}$ as
  induced subgraphs have bounded treewidth. We prove that for all
  non-negative integers $i, j, k, t$, the class of graphs that contain
  no even hole, no pyramid, no $K_t$, and no $S_{i, j, k}$ as induced
  subgraphs have bounded treewidth. To bound the treewidth, we prove
  that every graph of large treewidth must contain a large clique or a
  minimal separator of large cardinality.
\end{abstract}

\section{Introduction}
\label{sec:introduction}

In this article, all graphs are finite, simple, and undirected.  A
graph $H$ is an {\em induced subgraph} of a graph $G$ if some graph
isomorphic to $H$ can be obtained from $G$ by deleting vertices.  A
graph $G$ {\em contains $H$} if $H$ is an induced subgraph of $G$.  A
graph is {\em $H$-free} if it does not contain $H$. For a family of
graphs ${\mathcal H}$, $G$ is {\em ${\mathcal H}$-free} if for every
$H\in {\mathcal H}$, $G$ is $H$-free.  

A {\em hole} in a graph is a chordless cycle of length at least~4. It
is \emph{odd} or \emph{even} according to its length (that is its
number of edges). We denote by $K_t$ the complete graph on $t$
vertices.

\begin{figure}
  \label{fig:tc}
  \begin{center}
    \includegraphics[height=2cm]{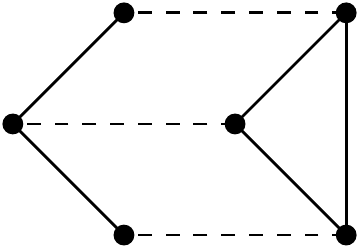}
    \hspace{.2em}
    \includegraphics[height=2cm]{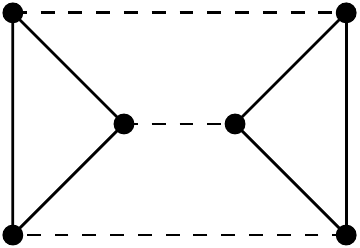}
    \hspace{.2em}
    \includegraphics[height=2cm]{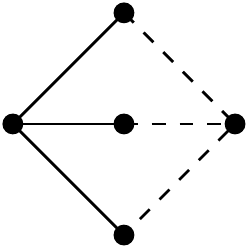}
    \hspace{.2em}
    \includegraphics[height=2cm]{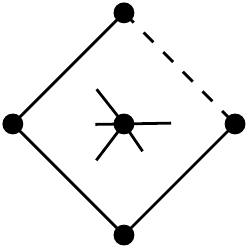}
  \end{center}
  \caption{Pyramid, prism, theta, and wheel (dashed lines represent
    paths)}
\end{figure}

A {\em theta} is a graph made of three internally vertex-disjoint
chordless paths $P_1 = a \dots b$, $P_2 = a \dots b$,
$P_3 = a \dots b$ of length at least~2 and such that no edges exist
between the paths except the three edges incident to $a$ and the three
edges incident to $b$ (see Fig.~\ref{fig:tc}).  Observe that a theta
contains an even hole, because at least two paths in the theta have
lengths of same parity and therefore induce an even hole.

We are interested in understanding the structure of even-hole-free
graphs and theta-free graphs.  Our motivation for this is explained in
the first paper of this
series~\cite{DBLP:journals/corr/abs-1906-10998}, where we give a
construction that we call layered wheel, showing that the cliquewith,
the rankwidth, and the treewidth of (theta, triangle)-free and (even
hole, $K_4$)-free graphs are unbounded. We also indicate questions,
suggested by this construction, about the induced subgraphs contained
in graphs with large treewidth.

In this second and last part, we prove that when excluding more
induced subgraphs, there is an upper bound on the treewidth.  Our results imply that the
maximum independent set problem can be solved in polynomial time for
some classes of graphs that are possibly of interest because they are
related to several well known open questions in the field.

\subsection*{Results}

We denote by $P_k$ the path on $k$ vertices. 
For three non-negative integers $i\leq j\leq k$, let $S_{i,j,k}$ be
the tree with a vertex $v$, from which start three paths with $i$,
$j$, and $k$ edges respectively.  Note that $S_{0, 0, k}$ is a path of
length $k$ (so, is equivalent to $P_{k+1}$) and that $S_{0, i, j} = S_{0, 0, i+j}$.  The \emph{claw} is
the graph $S_{1,1,1}$. Note that $\{S_{i,j,k}; 1\leq i\leq j \leq k\}$
is the set of all the subdivided claws and
$\{S_{i,j,k}; 0\leq i\leq j \leq k\}$ is the set of all subdivided
claws and paths.

A {\em pyramid} is a graph made of three chordless paths
$P_1 = a \dots b_1$, $P_2 = a \dots b_2$, $P_3 = a \dots b_3$ of
length at least~1, two of which have length at least 2, vertex-disjoint
except at $a$, and such that $b_1b_2b_3$ is a triangle and no edges
exist between the paths except those of the triangle and the three
edges incident to~$a$ (see Fig.~\ref{fig:tc}).

We do not not recall here the definition of treewidth and cliquewidth.
They are parameters that measure how complex a graph
is. See~\cite{DBLP:journals/jgt/HarveyW17,DBLP:journals/corr/abs-1901-00335}
for surveys about them.

Our main result states that for every fixed non-negative integers $i, j, k ,t$,
the following graph classes have bounded treewidth:

\begin{itemize}
\item (theta, triangle, $S_{i, j, k}$)-free graphs; 
\item (even hole, pyramid, $K_t$, $S_{i, j, k}$)-free graphs. 
\end{itemize}

The exact bounds and the proofs are given in Section~\ref{sec:bounding-tw}
(Theorems~\ref{th:ttf} and~\ref{th:ehf}).  In fact, the class on which
we actually work is larger. It is a common generalization $\cal C$ of
the graphs that we have to handle in the proofs for the two bounds
above.  Also, we do not exclude $S_{i, j, k}$, but some graphs that
contain it, the so-called $l$-span-wheels for sufficiently large
$l$. We postpone the definitions of $\cal C$ and of span wheels to
Section~\ref{sec:bounding-tw}.

To bound the treewidth, we prove that every
graph of large treewidth must contain a large clique or a minimal
separator of large cardinality.  Let us define them. 

For two vertices $s, t \in V(G)$, a set $X \subseteq V(G)$ is an {\em
  $st$-separator} if $s,t \notin X$ and $s$ and $t$ lie in different
connected components of $G\sm X$.  An $st$-separator $X$ is a {\em
  minimal $st$-separator} if it is an inclusion-wise minimal
$st$-separator. A set $X \subseteq V(G)$ is a {\em separator} if there
exist $s,t \in V(G)$ such that $X$ is an $st$-separator in $G$.  A set
$X \subseteq V(G)$ is a {\em minimal separator} if there exist
$s,t \in V(G)$ such that $X$ is a minimal $st$-separator in $G$.

 Our graphs have no large cliques by definition, and by studying their
 structure, we prove that they cannot contain large minimal
 separators, implying that their treewidth is bounded.

 Note that from the celebrated grid-minor theorem, it is easy to see
 that every graph of large treewidth contains a subgraph with a large
 minimal separator (a column in the middle of the grid contains such a
 separator). But since we are interested in the induced subgraph
 containment relation, we cannot delete edges and we have to rely on
 our reinforcement.

\subsection*{Treewidth and cliquewidth of some classes of graphs}

We now survey results about the treewidth in classes of graphs related
to the present work.  Complete graphs provide trivial examples of
even-hole-free graphs of arbitrarily large treewidth.
In~\cite{DBLP:journals/dm/CameronSHV18}, it is proved that (even hole,
triangle)-free graphs have bounded treewidth (this is based on a
structural description from~\cite{DBLP:journals/jgt/ConfortiCKV00}).
In~\cite{DBLP:journals/jgt/CameronCH18}, it is proved that for every
positive integer $t$, (even hole, pan, $K_t$)-free graphs have bounded
treewidth (where a \emph{pan} is any graph that consists of a hole and
a single vertex with precisely one neighbor on the hole).  It is
proved in~\cite{DBLP:journals/corr/abs-1906-10998} that the treewidth
of (theta, triangle)-free graphs and (even hole, pyramid, $K_4$)-free
graphs are unbounded.  Growing the treewidth 
in~\cite{DBLP:journals/corr/abs-1906-10998} requires introducing in
the graph a large clique minor and vertices of large degree.  It is
therefore natural to ask whether these two conditions are really
needed, and the answer is yes for both of them, because
in~\cite{adler} it is proved that even-hole-free graphs with no
$K_t$-minor have bounded treewidth, and in~\cite{tara} it is proved
that even-hole-free graphs with maximum degree~$t$ have bounded
treewidth.

Since having bounded cliquewidth is a weaker property than having
bounded treewidth but still has nice algorithmic applications, we
survey some results about the cliquewidth in classes related to the
present work.

It is proved in~\cite{DBLP:journals/dm/CameronSHV18} that (even hole,
cap)-free graphs with no clique separator have bounded cliquewidth
(where a \emph{cap} is any graph that consists of a hole and a single
vertex with precisely two adjacent neighbors on the hole, and a
\emph{clique separator} is a separator that is a clique).  It is
proved in~\cite{DBLP:journals/corr/abs-1906-10998}, that (triangle,
theta)-free and (even hole, pyramid, $K_4$)-free graphs have unbounded
cliquewidth.  It is proved in~\cite{adlerLMRTV:rwehf}, that (even
hole, diamond)-free graphs with no clique separator have unbounded
cliquewidth (the \emph{diamond} is the graph obtained from $K_4$ by
deleting an edge). The construction can be easily extended to (even
hole, pyramid, diamond)-free graphs as explained
in~\cite{chudetal:maxStEHFPyramfree}.  It is easy to provide (theta,
$K_4$, $S_{1, 1, 1}$)-free graphs (or equivalently (claw, $K_4$)-free
graphs) of unbounded cliquewidth.  To do so, consider a \emph{wall}
$W$, subdivide all edges to obtain $W'$, and take the line graph
$L(W')$ (see~\cite{DBLP:journals/corr/abs-1906-10998} for a definition
and Fig.~\ref{fig:line}).

\begin{figure}
  \begin{center}
    \includegraphics[height=2.5cm]{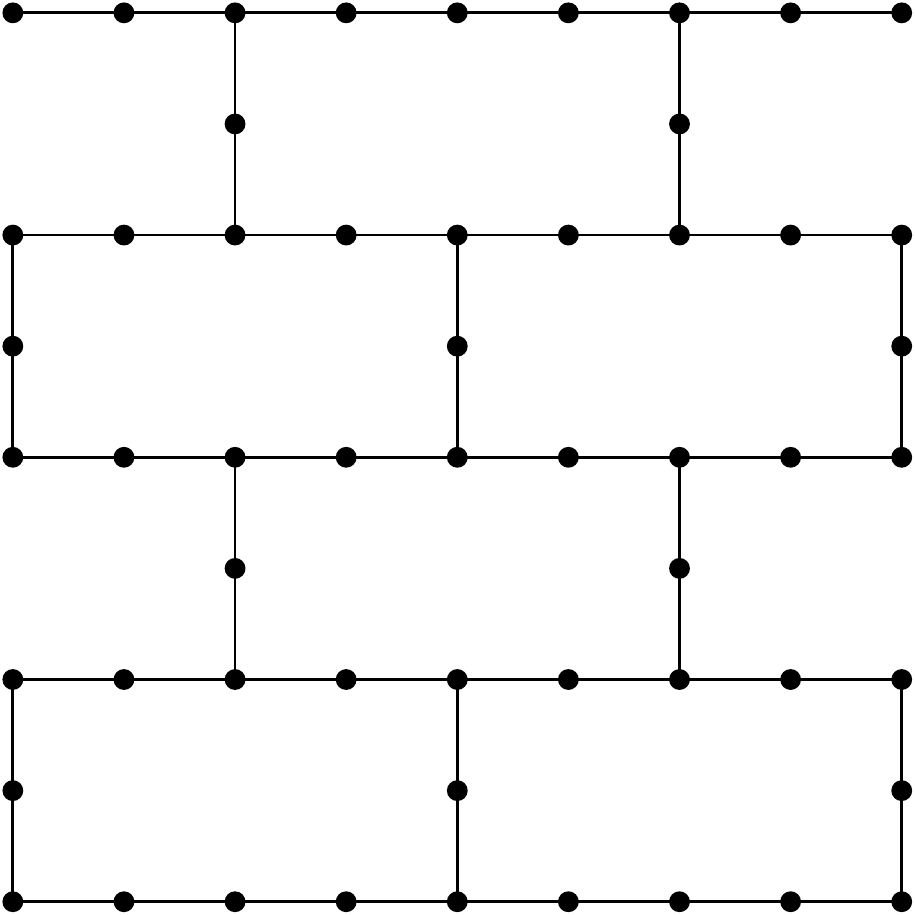}
    \hspace{.2em}
    \includegraphics[height=2.5cm]{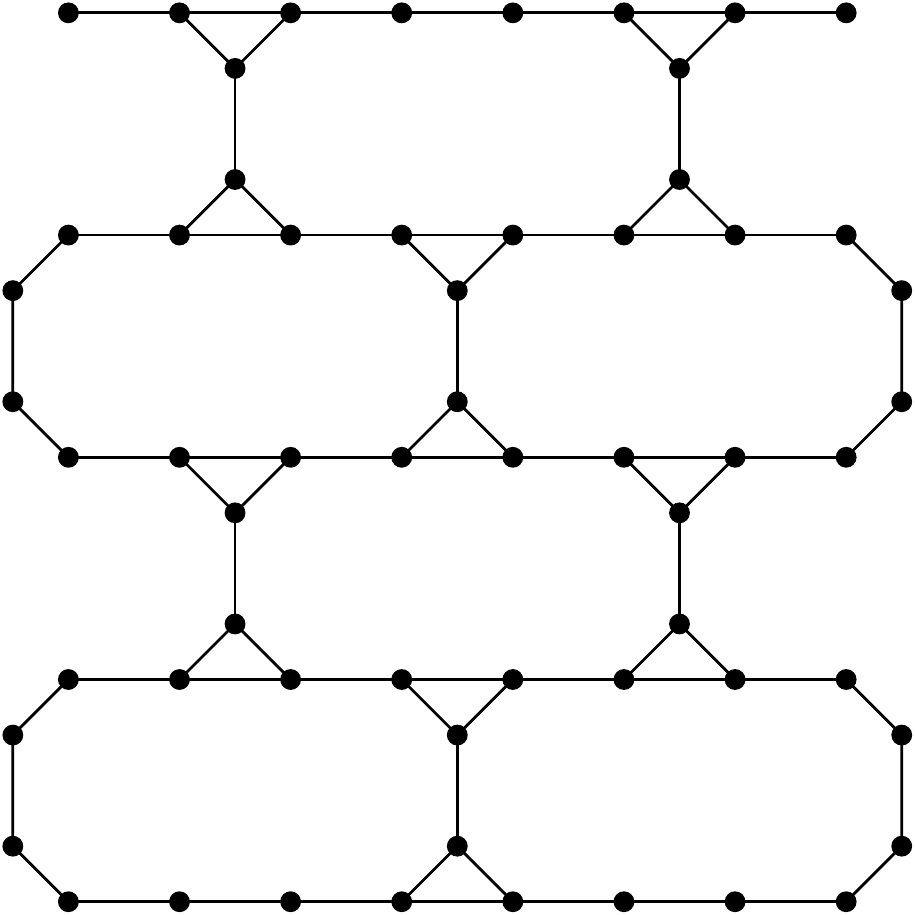}
  \end{center}
  \caption{A subdivision of a wall and its line graph\label{fig:line}}
\end{figure}

The results mentioned in this paragraph are extracted
from~\cite{DBLP:journals/cj/DabrowskiP16} (but some of them were first
proved in other works).  Let
$\mathcal H_U = \{P_7, S_{1, 1, 4}, S_{2, 2, 2}\}$ and
$\mathcal H_B = \{P_6, S_{1, 1, 3}\}$.  If $H$ contains a graph from
$\mathcal H_U$ as an induced subgraph, then the class of (triangle,
$H$)-free graph has unbounded cliquewidth (see Theorem 7.ii.6
in~\cite{DBLP:journals/cj/DabrowskiP16}). If $H$ is contained in a
graph from $\mathcal H_B$, then the class of (triangle, $H$)-free
graphs has bounded cliquewidth (see Theorem 7.i.3
in~\cite{DBLP:journals/cj/DabrowskiP16}).

The cliquewidth of (triangle, $S_{1, 2, 2}$)-free graphs is bounded,
see~\cite{DBLP:journals/corr/BrandstadtMM16}
or~\cite{DBLP:journals/jcss/DabrowskiDP17}.

\subsection*{Algorithmic consequences}

It is proved in~\cite{DBLP:journals/iandc/Courcelle90} that in every
class of graphs of bounded treewidth, many problems can be solved in
polynomial time. Our result has therefore applications to several
problems, but we here focus on one because the induced subgraphs that
are excluded in the most classical results and open questions about it
seem to be related to our classes.

An \emph{independent set} in a graph is a set of pairwise non-adjacent
vertices.  Our results imply that computing an independent set of
maximum cardinality can be performed in polynomial time for (theta,
triangle, $S_{i, j, k}$)-free graphs and (even hole, pyramid, $K_t$,
$S_{i, j, k}$)-free graphs.

Finding an independent set of maximum cardinality is polynomial time
solvable for (even hole, triangle)-free
graphs~\cite{DBLP:journals/dm/CameronSHV18} and (even hole,
pyramid)-free graphs~\cite{chudetal:maxStEHFPyramfree}.  Its
complexity is not known for (even hole, $K_4$)-free graphs and for
(theta, triangle)-free graphs.  Determining its complexity is also a
well known question for $S_{i, j, k}$-free graphs. It is NP-hard for
the class of $H$-free graphs whenever $H$ is not an induced subgraph
of some $S_{i, j, k}$~\cite{alekseev:83}.  It is solvable in
polynomial time for $H$-free graphs whenever $H$ is contained
in $P_k$ for $k= 6$ (see~\cite{DBLP:conf/soda/LokshantovVV14} for
$H=P_5$ and~\cite{DBLP:conf/soda/GrzesikKPP19} for $H=P_6$) or
contained in $S_{i, j, k}$ with $(i, j, k) \leq (1, 1, 2)$
(see~\cite{DBLP:journals/dam/Alekseev04}
and~\cite{DBLP:journals/jda/LozinM08} for the weighted version).  It
is solvable in polynomial time for ($P_7$, triangle)-free
graphs~\cite{DBLP:journals/dam/BrandstadtM18} and for ($S_{1,2, 4}$,
triangle)-free graphs~\cite{DBLP:journals/corr/abs-1806-09472}.  The
complexity is not known for $H$-free graphs whenever $H$ is some
$S_{i, j, k}$ that contains either $P_7$, $S_{1, 1, 3}$, or
$S_{1, 2, 2}$.

\subsection*{Bounding the number of minimal separators}

One possible method to find maximum weight independent sets for a
class of graphs is by proving that every graph in the class has
polynomially many minimal separators (where the polynomial is in the
number of vertices of the graph). This was for instance successfully
applied to (even hole, pyramid)-free graphs
in~\cite{chudetal:maxStEHFPyramfree}. Therefore, our result on (even
hole, pyramid, $K_t$, $S_{i, j, k}$)-free graphs does not settle a new
complexity result for the Maximum Independent Set problem (but it
still can be applied to other problems).

Note that bounding the number of minimal separators cannot be applied
to (even hole, $K_4$)-free graphs and to (theta, triangle)-free graphs
since there exist graphs in both classes that contain exponentially
many minimal separators.  These graphs are called \emph{$k$-turtle} and
\emph{$k$-ladder}, see Fig~\ref{fig:examples}.  It is
straightforward to check that they have exponentially many minimal
separators (the idea is that a separator can be built by making a
choice in each horizontal edge, and there are $k$ of them).  Moreover,
$k$-turtles are (theta, triangle)-free (provided that the outer cycle
is sufficiently subdivided) and $k$-ladders are (even hole,
$K_4$)-free.

\begin{figure}
  \begin{center}
    \includegraphics[height=4cm]{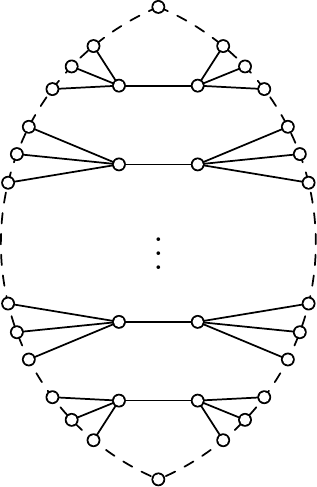}
    \hspace{2em}
    \includegraphics[height=4cm]{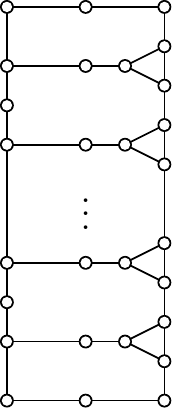}
  \end{center}
  \caption{$k$-turtle and $k$-ladder (dashed lines represent
    paths)\label{fig:examples}}
\end{figure}

\subsection*{Open questions}

It is not known whether (even hole, $K_4$, diamond)-free graphs have
bounded treewidth (or cliquewidth). Also, for every fixed integer
$t\geq 4$, it is not known whether (theta, triangle)-free graphs of
maximum degree $t$ have bounded treewidth (for $t=1, 2$, the treewidth
is trivially bounded and for $t=3$ it follows from Corollary~4.3 in
\cite{adler}).  It is not known whether (triangle, $S_{1, 2, 3}$)-free
graphs have bounded cliquewidth,
see~\cite{DBLP:journals/jcss/DabrowskiDP17} for other open problems of
the same flavor.

\subsection*{Outline of the paper}

In Section~\ref{sec:pilip}, we explain our method to bound the
treewidth. In Section~\ref{sec:technical-lemma}, we give two technical
lemmas that highlight structural similarities between (theta,
triangle)-free and (even hole, pyramid)-free graphs.  These will be
used in Section~\ref{sec:bounding-tw} where we prove that graphs in
our classes do not contain minimal separators of large cardinality,
implying that their treewidth is bounded.

\subsection*{Notation}

By \emph{path} we mean chordless (or induced) path. When $a$ and $b$
are vertices of a path $P$, we denote by $aPb$ the subpath of $P$ with
ends $a$ and $b$.

When $A, B\subseteq V(G)$, we denote by $N_B(A)$ the set of vertices
from $B\sm A$ that have at least one neighbor in $A$ and $N(A)$ means
$N_{V(G)}(A)$.  Note that $N_B(A)$ is disjoint from $A$.  We write
$N(a)$ instead of $N(\{a\})$ and $N[a]$ for $\{a\}\cup N(a)$.  We
denote by $G[A]$ the subgraph of $G$ induced by $A$.  To avoid too
heavy notation, since there is no risk of confusion, when $H$ is an
induced subgraph of $G$, we write $N_H$ instead of $N_{V(H)}$.

A vertex $x$ is \emph{complete} (resp.\ \emph{anticomplete}) to $A$ if
$x\notin A$ and $x$ is adjacent to all vertices of $A$ (resp.\ to no
vertex of $A$). We say that $A$ is \emph{complete} (resp.\
\emph{anticomplete}) to $B$ if every vertex of $A$ is complete
(resp. anticomplete) to $B$ (note that this means in particular that
$A$ and $B$ are disjoint).

 \section{Treewidth and minimal separators}
 \label{sec:pilip}

 If a graph has large treewidth, then it contains some sub-structure
 that is highly connected in some sense (grid minor, bramble, tangle,
 see \cite{DBLP:journals/jgt/HarveyW17}).
 Theorem~\ref{th:PMC-Pilipczuk-thm} seems to be a new statement of
 that kind.  It says that graphs of large treewidth must contain
 either a large clique or a minimal separator of large size.  However,
 its converse is false, as shown by $K_{2, t}$ that has treewidth~2
 (it is a series-parallel graph) and contains a minimal separator of
 size~$t$.

 A variant of the following theorem can be obtained from the
 celebrated excluded grid theorem of Robertson and Seymour.  The idea
 is to use a large grid to obtain a large minimal separator.  But
 there are technicalities because we are not allowed to delete edges,
 so the grid might contain many crossing edges. To find two vertices
 that cannot be separated by a small separator, one needs to clean the
 grid.  We do not include the details since
 the following provides a better bound. 

\begin{theorem} 
\label{th:PMC-Pilipczuk-thm}
Let $G$ be a graph and let $k \geq 2$ and $s \geq 1$ be positive
integers. If $G$ does not contain a clique on $k$ vertices nor a
minimal separator of size larger than $s$, then the treewidth of $G$
is at most $(k-1)s^3 - 1$.
\end{theorem}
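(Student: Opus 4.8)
The plan is to bound the treewidth via the theory of potential maximal cliques and minimal separators, which gives a clean characterization of treewidth in terms of the structure of minimal separators. The key tool is the well-known fact (due to Bouchitté and Todinca) that relates treewidth to \emph{potential maximal cliques} (PMCs): a vertex set $\Omega$ is a PMC if it is a maximal clique in some minimal triangulation of $G$, and the treewidth of $G$ equals the minimum over all minimal triangulations of the maximum clique size, minus one. The central combinatorial fact I would use is that every PMC $\Omega$ is ``captured'' by the minimal separators it contains: for each connected component $C$ of $G \setminus \Omega$, the neighborhood $N(C)$ is a minimal separator contained in $\Omega$, and moreover every vertex of $\Omega$ lies in $N(C)$ for some such component or is adjacent to all of $\Omega$ minus itself. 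So the strategy is: construct a minimal triangulation and bound the size of each of its maximal cliques (each of which is a PMC) by $(k-1)s^3$.

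Let me think about how to count. Let $\Omega$ be a PMC and let $C_1, \dots, C_m$ be the connected components of $G \setminus \Omega$. The classical structural result says $\Omega = \bigcup_i N(C_i) \cup F$, where $F$ is the set of vertices of $\Omega$ adjacent to every other vertex of $\Omega$; but in fact one can show $\Omega = \bigcup_i N(C_i)$ whenever $\Omega$ is not itself a clique, and each $N(C_i)$ is a minimal separator, hence has size at most $s$ by hypothesis. If the number $m$ of relevant components were bounded by $s^2/(\text{something})$ or so, we would be done, but $m$ can be large. The trick must therefore be to bound the number of \emph{distinct} minimal separators needed to cover $\Omega$ and to exploit the clique bound. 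Here I would use the other half of the PMC characterization: the vertices of $\Omega$ not covered by the union of component-neighborhoods form a clique (since any two of them must be adjacent, as otherwise a non-edge inside $\Omega$ would survive in the triangulation, contradicting that $\Omega$ is a clique in the triangulation). That clique-part has size at most $k-1$.

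**The combinatorial heart** is to bound $|\bigcup_i N(C_i)|$. Each $N(C_i)$ has size $\le s$, but there may be many components. The way to get the $s^3$ factor, I expect, is a counting argument on \emph{pairs}: one fixes a single minimal separator $S = N(C_1)$ of size at most $s$, and shows that every other vertex of $\Omega$ that is reachable ``through'' the separator structure can be charged to a triple of vertices — e.g., a vertex $x \in \Omega \setminus S$, together with the component $C$ of $G \setminus S$ containing (a piece that connects to) $x$, is determined by its neighbors in $S$, of which there are at most $s$, times the number of separators, etc. More concretely I would try: fix $s, t \in V(G)$ and a minimal $st$-separator $S$ of maximum use; every vertex of the PMC lies in $N(C)$ for some full component $C$ associated with $S$; the number of such full components is at most... this is where the cubic bound should emerge from three nested factors of $s$ (one for the separator, one for the components it borders, one for the vertices in each neighborhood), with the extra $(k-1)$ coming from the clique part.

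**The main obstacle** I anticipate is establishing that a \emph{single} minimal separator (or a bounded-depth family built from one) suffices to cover the whole PMC up to a clique, so that the count does not blow up with the number of components $m$ — the naive bound $m \cdot s$ is useless since $m$ is unbounded. The right move is almost certainly to pick a good minimal separator $S$ inside $\Omega$ and observe that the components $C_i$ group according to which side of $S$ they fall on, and that each vertex $v\in\Omega$ is adjacent to some vertex in every full component; combined with $|S|\le s$ and the clique bound $k-1$, bounding the number of full components and the number of vertices each contributes by $s$ and $s$ respectively yields the $(k-1)s^3$ bound. Getting the precise grouping so the three factors of $s$ are genuinely independent, rather than double-counting, is the delicate part; I would verify it by carefully tracking, for each PMC vertex, a witnessing triple (separator vertex, bordering component, neighbor) and checking injectivity of this charging.
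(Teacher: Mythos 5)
You correctly identify the Bouchitt\'e--Todinca framework, and your reduction is the same as the paper's: take a minimal triangulation, note that every maximal clique of it is a potential maximal clique (PMC) of $G$, and bound the size of every PMC by $(k-1)s^3$. The gap is that you never actually prove this size bound, and the counting scheme you sketch in its place would fail. Your plan is to cover a PMC $\Omega$ by $\bigcup_i N(C_i)$ plus a clique part, and to charge each vertex of $\Omega$ to a triple (separator vertex, bordering full component, neighbor), ``bounding the number of full components \ldots by $s$.'' That last step is false: the number of full components attached to a minimal separator is not bounded by any function of $s$ and $k$ (a star $K_{1,n}$ is triangle-free, its center is a minimal separator of size~$1$, and it has $n$ full components). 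More generally, a PMC can be the union of unboundedly many distinct small minimal separators $N(C_i)$, so no injective charging of $\Omega$ into a structure of size $s \cdot s \cdot s$ indexed by a \emph{single} separator can exist; the hypothesis has to be used to \emph{produce a new large minimal separator}, not merely to bound the ones of the form $N(C_i)$.

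The paper's proof of the key lemma (its Lemma~\ref{lem:PMC3}) has a genuinely different structure, and both of its phases are absent from your proposal. Phase 1: if every $x \in \Omega$ has fewer than $s^3$ non-neighbors inside $\Omega$, then greedily intersecting neighborhoods ($A_{i+1} = A_i \cap N(x_i)$) loses at most $s^3$ vertices per step, so $|\Omega| > (k-1)s^3$ yields a clique of size $k$ --- here the factor $k-1$ enters additively-per-step, not as a multiplicative ``clique part'' of size $k-1$ as in your sketch. Phase 2: otherwise some $x$ has a set $Y = \{x\} \cup (\Omega \setminus N[x])$ of size $> s^3$ inducing a disconnected graph; one then runs a merging process on a partition of $Y$, gluing parts via components $D$ that cover nonedges (each merge joins at most $|N(D)| \le s$ parts), until some merged part $A$ exceeds size $s^2$ while a vertex $y \in Y \setminus A$ remains. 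Picking for each $a \in A$ a component $D_a$ covering the nonedge $ya$, pigeonhole ($|N(D_a)| \le s$, $|A| > s^2$) gives more than $s$ distinct components, and any minimal separator between $y$ and the connected set containing $A$ must meet every one of them, giving a minimal separator of size $> s$. This construction of a separator that is \emph{not} any $N(C_i)$ is the missing idea; without it (or a substitute), your outline does not close.
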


Before proving Theorem~\ref{th:PMC-Pilipczuk-thm}, let us introduce
some terminology and state results due to Bouchitt\'e and
Todinca~\cite{DBLP:journals/siamcomp/BouchitteT01}.  For a graph $G$
we denote by $\cc(G)$ the set of all connected components of $G$
(viewed as subsets of $V(G)$).  A set
$F \seq {V(G) \choose 2} \sm E(G)$ is a {\em fill-in} or {\em chordal
  completion} if $G + F = (V(G), E(G) \cup F)$ is a chordal graph.  A
fill-in $F$ is {\em minimal} if it is inclusion-wise minimal.  If
$X \subseteq V(G)$, then every connected component
$D \in \cc(G \sm X)$ with $N(D) = X$ is called a component {\em full}
to $X$. Observe that a set $X \subseteq V(G)$ is a minimal separator
if and only if there exist at least two connected components of $G\sm X$
that are full to $X$.  An important property of minimal separators is
that no new minimal separator appears when applying a minimal fill-in.

\begin{lemma}[see \cite{DBLP:journals/siamcomp/BouchitteT01}]
  For every graph $G$, minimal fill-in $F$, and  minimal separator
  $X$ in $G + F$, $X$ is a minimal separator in $G$ as
  well. Furthermore, the families of components $\cc((G + F) \sm X)$
  and $\cc(G \sm X)$ are equal (as families of subsets of $V(G)$).
\end{lemma}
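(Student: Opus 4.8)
The plan is to fix the minimal separator $X$ of $H := G + F$ and to compare the partitions $\cc(G \sm X)$ and $\cc(H \sm X)$ head to head. Since $G$ and $H$ share the same vertex set and $E(G) \seq E(H)$, any two vertices connected in $G \sm X$ are connected in $H \sm X$, so every member of $\cc(G \sm X)$ is contained in a unique member of $\cc(H \sm X)$; in other words $\cc(G \sm X)$ refines $\cc(H \sm X)$. It therefore suffices to establish two statements: first, that every $C \in \cc(H \sm X)$ is already connected in $G \sm X$ (this gives equality of the two families as sets of subsets of $V(G)$); and second, that at least two components of $H \sm X$ that are full to $X$ in $H$ remain full to $X$ in $G$ (this makes $X$ a minimal separator of $G$). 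I will actually aim at the stronger componentwise statement that both connectivity and fullness transfer from $H$ to $G$.

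For the structural input I would first use that $H$ is chordal: a minimal separator of a chordal graph is a clique, so $X$ is a clique of $H$, and by the component characterization recalled just before the lemma there are at least two components $C_1, C_2 \in \cc(H \sm X)$ with $N_H(C_i) = X$ (neighbourhoods here and below taken in $H$, and $N_G$ in $G$). The cliqueness of $X$ is the feature I expect to use repeatedly. The minimality of $F$ enters through the classical fact that each fill edge is \emph{essential}: for $uv \in F$, the graph $H - uv$ is not chordal, and since $H$ itself is chordal this forces an induced four-cycle $u\,a\,v\,b$ in $H - uv$, with $ua, av, vb, bu \in E(H)$ and $ab \notin E(H)$. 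The two common neighbours $a, b$ satisfy $a, b \in N_H(u) \cup N_H(v)$, and if $u, v$ lie inside one component $C$ of $H \sm X$ then $N_H(u) \cup N_H(v) \seq X \cup C$; moreover $a$ and $b$ cannot both lie in $X$, because $X$ is a clique while $ab \notin E(H)$.

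The crux, and the step I expect to be the main obstacle, is to turn a discrepancy between the $G$- and $H$-components into a non-essential fill edge, contradicting minimality. Concretely, if some full $C \in \cc(H \sm X)$ splits into several components of $G \sm X$, or satisfies $N_G(C) \subsetneq X$, then there is a fill edge $uv \in F$ witnessing the split (joining two distinct components of $G[C]$) or witnessing the loss of fullness (joining some $x \in X$ to $C$). I would then choose such a witness $uv$ extremally---minimizing the $G$-distance it short-circuits, or inducting on $|C|$---and argue that the induced four-cycle $u\,a\,v\,b$ guaranteed by essentiality is impossible for this extremal choice, since one of $a, b$ would have to lie in $C$ and provide either a shorter $G$-connection or a smaller witness. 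This contradiction yields both required statements at once: every component of $H \sm X$ stays connected in $G$ with its full $H$-neighbourhood $X$, so $\cc(H \sm X) = \cc(G \sm X)$ and $X$ is a minimal separator of $G$. An alternative route, avoiding the extremal case analysis, is to invoke the Parra--Scheffler description of minimal triangulations as exactly the graphs obtained by completing a maximal pairwise-non-crossing family of minimal separators of $G$ into cliques, from which both claims are immediate; but the direct argument above keeps the proof self-contained.
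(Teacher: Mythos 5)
First, a caveat about the comparison itself: the paper offers no proof of this lemma --- it is imported verbatim from Bouchitt\'e and Todinca \cite{DBLP:journals/siamcomp/BouchitteT01} --- so there is no ``paper proof'' to match your argument against, and I can only judge it on its own merits. Your ingredients are the right ones: the refinement observation, Dirac's theorem that a minimal separator $X$ of the chordal graph $H := G+F$ is a clique of $H$, and the fact that minimality of $F$ makes every fill edge $uv$ the unique chord of a $4$-cycle $u\,a\,v\,b$ of $H$. But the step you yourself flag as the crux is genuinely missing, and neither of the extremal devices you propose closes it. The difficulty is a regress: starting from a ``bad'' fill edge $uv$ (one joining two components of $G[C]$, or joining $x \in X$ to $C$ when $x$ has no $G$-neighbour in $C$), the $4$-cycle gives a vertex $a$ of $C$ with $ua, av \in E(H)$; but these two edges may themselves be fill edges, and all one can conclude is that at least one of them is again bad. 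No measure you name decreases along this step: the ``$G$-distance short-circuited'' by $uv$ is infinite within $G \sm X$ (its ends lie in distinct components of $G \sm X$, which is exactly the situation you are trying to refute), measuring distance in all of $G$ may route through $X$ and need not shrink since $a$ is only known to lie somewhere in $C$, and induction on $|C|$ gains nothing because the new bad edge lies inside the same component $C$. So the contradiction never arrives.

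The repair that keeps your toolkit is to use minimality of $F$ globally rather than edge-by-edge. Delete from $H$, all at once, every fill edge whose two ends lie outside $X$ and in different components of $G \sm X$, and verify that the resulting graph $H'$ is still chordal; minimality of $F$ then forces this set of deleted edges to be empty, which is precisely the equality of the component families. Chordality of $H'$ needs only the clique property of $X$: a chordless cycle $Z$ of $H'$ of length at least $4$ either has all its vertices in $X \cup A$ for a single component $A$ of $G \sm X$, in which case none of its $H$-chords was deleted, so $Z$ is chordless in $H$, a contradiction; or $Z$ meets two components of $G \sm X$, and then, since every edge of $H'$ with both ends outside $X$ stays inside one component, $Z$ must contain two non-consecutive vertices of $X$, which are adjacent in $H'$, again a contradiction. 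For the minimal-separator claim, apply the same one-shot deletion to the fill edges joining a fixed $x \in X$ to a fixed full component $D$: a short fan argument (all $H$-chords of an offending cycle are incident to $x$, forcing $x$ to be complete to the cycle and the two cycle-neighbours of $x$ to lie in $X$, hence adjacent) shows chordality would survive, so not all edges from $x$ to $D$ can be fill edges, giving $N_G(D) = X$. Finally, your fallback via Parra--Scheffler is sound but invokes a theorem at least as strong as the lemma itself, so as a ``self-contained'' argument it is on the same footing as the paper's own decision to simply cite \cite{DBLP:journals/siamcomp/BouchitteT01}.
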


A set $\Omega \subseteq V(G)$ is a {\em potential maximal clique (PMC)} 
if there exists a minimal fill-in $F$ such that $\Omega$ is a
maximal clique of $G + F$. A PMC is surrounded by minimal separators.

\begin{lemma}[see \cite{DBLP:journals/siamcomp/BouchitteT01}]
\label{lem:PMC1}
  For every PMC $\Omega$ in $G$ and every component $D \in \cc(G \sm \Omega)$, the set $N(D)$ is a minimal
  separator in $G$ with $D$ being a full component.
\end{lemma}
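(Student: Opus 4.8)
The plan is to fix a minimal fill-in $F$ for which $\Omega$ is a maximal clique of the chordal graph $H := G + F$, and to reduce the statement to producing a second full component. Write $S := N(D)$. Since $D \in \cc(G \sm \Omega)$, no vertex of $(V(G) \sm \Omega) \sm D$ is adjacent to $D$, so $S \seq \Omega$; moreover $D$ is a connected component of $G \sm S$ with $N(D) = S$, hence $D$ is already a full component of $S$. By the characterization of minimal separators recalled earlier (a set is a minimal separator exactly when its removal leaves at least two full components), it therefore suffices to exhibit a component $D' \neq D$ of $G \sm S$ with $N(D') = S$: this shows at once that $S$ is a minimal separator and that $D$ is full to it.

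First I would extract two properties of $\Omega$: \textbf{(i)} no component of $G \sm \Omega$ is full to $\Omega$; and \textbf{(ii)} for all $x, y \in \Omega$ with $xy \notin E(G)$ there is a component $B$ of $G \sm \Omega$ with $x, y \in N(B)$. Property (i) follows from chordality alone: if some $D$ were full to $\Omega$ in $G$, then the component $C \supseteq D$ of $H \sm \Omega$ would satisfy $N_H(C) = \Omega$; but in a chordal graph the running-intersection property of a clique tree forces a component of $H \sm \Omega$ to attach to the maximal clique $\Omega$ only along a clique-tree separator $\Omega \cap K \subsetneq \Omega$, so $N_H(C) \subsetneq \Omega$, a contradiction. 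In particular $S = N(D) \subsetneq \Omega$.

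Granting (i) and (ii), I would finish as follows. Fix $y \in \Omega \sm S$ and let $D'$ be the component of $G \sm S$ containing $y$; since $y \in \Omega$ while $D \cap \Omega = \emptyset$, we have $D' \neq D$ and $N(D') \seq S$. To see $N(D') = S$, take any $s \in S$; then $s, y$ are distinct vertices of $\Omega$. If $sy \in E(G)$, then $s$ is adjacent to $y \in D'$. Otherwise $sy \notin E(G)$, and (ii) yields a component $B$ of $G \sm \Omega$ with $s, y \in N(B)$; as $B \seq V(G) \sm \Omega \seq V(G) \sm S$ is connected and $y \in N(B) \sm S$, the set $B \cup \{y\}$ lies in the single component $D'$, so $s \in N(B) \seq N(D')$. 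Either way $s \in N(D')$, hence $N(D') = S$ and $D'$ is the required second full component.

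The main obstacle is property (ii) (the \emph{cliquish} property of $\Omega$), which is the only step genuinely using minimality of $F$; property (i) used chordality only. To prove it I would delete the fill edge $xy \in F$ from $H$: by minimality $H - xy$ is not chordal, so it contains a hole, which must pass through both $x$ and $y$ (otherwise it would already be a hole of $H$). Such a hole certifies an induced path joining $x$ and $y$ whose interior avoids $N(x) \cap N(y)$, and, after localizing, a component $B$ of $G \sm \Omega$ adjacent to both $x$ and $y$. Controlling which hole-vertices lie in $\Omega$ and ensuring the connecting subpath avoids $\Omega$ is the delicate point; this is precisely the Bouchitt\'e--Todinca characterization of potential maximal cliques, which I would either reproduce or cite. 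Everything else above is routine.
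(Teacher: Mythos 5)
Your proof is correct, but note that the paper itself gives no proof of Lemma~\ref{lem:PMC1}: it is imported from Bouchitt\'e and Todinca~\cite{DBLP:journals/siamcomp/BouchitteT01}, just like Theorem~\ref{th:PMC2}. What you have actually done is derive the lemma from Theorem~\ref{th:PMC2}, since your properties (i) and (ii) are verbatim its two conditions, and granting those, the rest of your argument is sound: $S=N(D)\subsetneq\Omega$ by (i); $D$ is a full component of $G\sm S$ because $N(D)=S$ and $D\cap S=\emptyset$; and for $y\in\Omega\sm S$ the component $D'$ of $G\sm S$ containing $y$ is distinct from $D$ and has $S\subseteq N(D')$, since each $s\in S$ is either a $G$-neighbor of $y$ or attaches to $D'$ through a covering component $B\in\cc(G\sm\Omega)$, where $B\cup\{y\}$ is connected in $G\sm S$. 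Combined with the paper's observation that a set is a minimal separator exactly when its removal leaves at least two full components, this yields the lemma; your clique-tree proof of (i) is also correct, so the only step you outsource is (ii), i.e.\ Theorem~\ref{th:PMC2}(ii), and citing that is on the same footing as the paper's own treatment. One caveat about what each route buys: in~\cite{DBLP:journals/siamcomp/BouchitteT01} the logical order is the reverse of yours --- the content of Lemma~\ref{lem:PMC1} is proved first (directly, from chordality of $G+F$ together with the unnumbered lemma that minimal separators and their components survive minimal fill-ins) and then feeds into the proof of the characterization Theorem~\ref{th:PMC2}. So your derivation is a perfectly valid shortcut inside this paper, where both statements are quoted as black boxes, but it could not serve as a from-scratch proof without risking circularity; for that, the direct route (of which your argument for (i) is already the main chordal ingredient) is the one to follow.
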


The following characterizes PMCs.

\begin{theorem}[see \cite{DBLP:journals/siamcomp/BouchitteT01}]
\label{th:PMC2}
  A set $\Omega \subseteq V(G)$ is a PMC in $G$ if and only if the following two conditions hold:
  \begin{enumerate}[label=(\roman*)]
    \item for every $D \in \cc(G \sm \Omega)$ we have $N(D) \subsetneq \Omega$;
    \item for every $x, y \in \Omega$ either $x = y$, $xy \in E(G)$, or there exists $D \in \cc(G \sm \Omega)$ with $x, y \in N(D)$.
  \end{enumerate}
\end{theorem}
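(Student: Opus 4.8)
The plan is to prove both implications directly from the definition of a PMC as a maximal clique of some minimal fill-in $F$, with $H = G + F$ chordal, using standard properties of chordal graphs together with the Bouchitt\'e--Todinca preservation principle already quoted before Lemma~\ref{lem:PMC1}.

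For the forward direction, assume $\Omega$ is a PMC. First I would record that a minimal fill-in adds no edge joining two distinct components of $G \sm \Omega$ and changes no neighborhood, so that $\cc(G \sm \Omega) = \cc(H \sm \Omega)$ and $N_G(D) = N_H(D)$ for every component $D$; this is the analogue, for the clique $\Omega$, of the separator-preservation lemma stated above. Condition (i) then becomes a statement about the chordal graph $H$: for a maximal clique $\Omega$ of a chordal graph and a component $D$ of $H \sm \Omega$, the set $N_H(D) \seq \Omega$ is a minimal separator, and it is a \emph{proper} subset because $\Omega$ is maximal (if $N_H(D) = \Omega$, a perfect elimination ordering of $H[\Omega \cup D]$ yields a vertex of $D$ complete to $\Omega$, contradicting maximality). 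For condition (ii), take $x, y \in \Omega$ with $x \neq y$ and $xy \notin E(G)$; then $xy \in F$, so by minimality $H - xy$ is not chordal, and since $H$ is chordal this forces $xy$ to be the unique chord of a four-cycle $x\,a\,y\,b$ of $H$ with $ab \notin E(H)$. As $ab \notin E(H)$ and $\Omega$ is a clique, $a$ and $b$ cannot both lie in $\Omega$; say $a \notin \Omega$. The component $D$ of $H \sm \Omega = G \sm \Omega$ containing $a$ then has $x, y \in N_H(D) = N_G(D)$, which is exactly (ii).

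For the reverse direction, assume (i) and (ii) and build a minimal fill-in witnessing that $\Omega$ is a PMC. Complete $\Omega$ into a clique, and for each component $D$ of $G \sm \Omega$ recursively compute a minimal fill-in $F_D$ of the \emph{block} $G[D \cup N(D)]$ in which $N(D)$ has first been completed into a clique; this is legitimate because (by Lemma~\ref{lem:PMC1}-type reasoning) $N(D)$ is a minimal separator of $G$ with $D$ as a full component. Let $F$ be the union of the edges completing $\Omega$ and of all the $F_D$, and set $H = G + F$; note the $F_D$ act on pairwise disjoint vertex sets outside $\Omega$, and no edge of $F$ joins two distinct components of $G \sm \Omega$.

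It then remains to verify that $H$ is chordal, that $\Omega$ is a maximal clique of $H$, and that $F$ is inclusion-wise minimal. Chordality follows because $\Omega$ is a clique separating the blocks: a chordless cycle of length at least four meets $\Omega$ in at most two adjacent vertices, hence cannot cross between two components and must live inside a single block $G[D \cup N(D)] + F_D$, contradicting chordality of that block. Maximality of $\Omega$ is where (i) enters: any $v \notin \Omega$ lies in some $D$ and its $H$-neighbors in $\Omega$ are contained in $N(D) \subsetneq \Omega$, so $v$ is not complete to $\Omega$. The delicate point, which I expect to be the main obstacle, is \emph{minimality} of $F$. For fill edges inside a block it is immediate from the inductive minimality of $F_D$ (each block is an induced subgraph of $H$, so the certifying chordless cycle survives). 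For a fill edge $xy$ inside $\Omega$ this is exactly where (ii) is used: it supplies a component $D$ with $x, y \in N(D)$, and since $D$ is a full component of the minimal separator $N(D)$ one routes a chordless path from $x$ to $y$ with interior in $D$ and closes it through $\Omega \sm \{x,y\}$ to obtain, in $H - xy$, an induced cycle of length at least four whose only missing chord is $xy$; hence $H - xy$ is not chordal and $xy$ is necessary. The one place requiring genuine care is the degenerate situation where $x$ and $y$ share only common neighbors in $D$, handled by exhibiting two non-adjacent common neighbors or a longer detour guaranteed by connectivity and fullness of $D$.
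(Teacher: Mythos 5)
The paper does not prove Theorem~\ref{th:PMC2} at all: it is imported verbatim from Bouchitt\'e and Todinca~\cite{DBLP:journals/siamcomp/BouchitteT01}, so there is no internal proof to compare against. Your outline is essentially the original Bouchitt\'e--Todinca argument (forward direction via the unique-chord-of-a-square property of minimal fill-ins, reverse direction by completing $\Omega$ and minimally triangulating each block $G[D \cup N(D)]$), so the question is only whether your version is complete. It is not quite, in two places.

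First, the ``delicate point'' you flag in the reverse direction is handled incorrectly. Your fallbacks (``two non-adjacent common neighbors or a longer detour guaranteed by connectivity and fullness of $D$'') are not guaranteed to exist; the actual mechanism is condition (i), which you never invoke here. Since every fill edge incident to $D$ lies inside the block on $D \cup N_G(D)$, you have $N_{G+F}(D) = N_G(D)$, and (i) gives some $z \in \Omega \sm N(D)$; moreover $z \neq x,y$ because $x,y \in N(D)$ by the choice of $D$ covering $xy$. This $z$ is adjacent to both $x$ and $y$ (as $\Omega$ is a clique of $G+F$) and has no neighbor in $D$, so $z$ together with a shortest $x$--$y$ path with interior in $D$ yields a chordless cycle of length at least~$4$ in $(G+F) - xy$, with no case analysis. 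Relatedly, justifying the block construction by ``Lemma~\ref{lem:PMC1}-type reasoning'' is circular, since Lemma~\ref{lem:PMC1} presupposes that $\Omega$ is a PMC --- exactly what you are proving; either derive the fullness of the component containing $\Omega \sm N(D)$ directly from (i) and (ii), or observe that the construction never actually needs $N(D)$ to be a minimal separator. Finally, passing from ``each single edge of $F$ is necessary'' to ``$F$ is inclusion-wise minimal'' is the nontrivial direction of the Rose--Tarjan--Lueker characterization of minimal fill-ins and should be cited, not assumed silently.

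Second, in the forward direction the preservation claim ($\cc(G \sm \Omega) = \cc((G+F) \sm \Omega)$ with equal neighborhoods) is not a routine ``analogue'' of the quoted separator lemma: it is itself a lemma of~\cite{DBLP:journals/siamcomp/BouchitteT01} of comparable depth, and your proof of (ii) genuinely depends on it, since the vertex $a$ of your square may be joined to $x$ or $y$ only by fill edges, in which case $x,y \in N_{G+F}(D)$ does not by itself give $x,y \in N_G(D)$. (For condition (i) the claim can be bypassed: $N_G(D) \seq N_{G+F}(D')$ for the component $D'$ of $(G+F) \sm \Omega$ containing $D$, and $N_{G+F}(D') \subsetneq \Omega$ by your chordal-graph argument, which is correct.) So the forward direction as written rests on an unproved statement that carries most of its content; you should either prove it or cite it explicitly.
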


In the second condition of Theorem~\ref{th:PMC2}, we say that a component $D$ {\em covers} the nonedge $xy$.

\newcommand{\pmc}{\Omega}

\begin{lemma} 
\label{lem:PMC3}
Let $G$ be a graph, $k \geq 2$ and $s \geq 1$ be integers, and let
$\Omega$ be a PMC in $G$ with $|\Omega| > (k - 1)s^3$.  Then there
exists in $G$ either a clique of size $k$ or a minimal separator of
size larger than $s$.
\end{lemma}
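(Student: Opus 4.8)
I would prove the contrapositive: assuming that $G$ contains no clique on $k$ vertices and no minimal separator of size larger than $s$, I would show $|\Omega| \le (k-1)s^3$. The starting point is Lemma~\ref{lem:PMC1}: every component $D \in \cc(G \sm \Omega)$ yields a minimal separator $N(D)$, so by hypothesis $|N(D)| \le s$. Combined with Theorem~\ref{th:PMC2}(ii), this means that every non-edge $xy$ of $G[\Omega]$ is witnessed by some component $D$ whose separator $N(D)$ has at most $s$ vertices and contains both $x$ and $y$.

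First I would isolate the vertices that create no non-edges. Let $U$ be the set of vertices $x \in \Omega$ that are complete to $\Omega \sm \{x\}$ in $G$. Any two such vertices are adjacent, so $U$ is a clique, and the no-$K_k$ hypothesis gives $|U| \le k-1$. Every remaining vertex $v \in \Omega \sm U$ has a non-neighbour in $\Omega$, hence by the previous paragraph lies in at least one separator $N(D)$ of size $\le s$; thus $\Omega \sm U$ is covered by the family $\{N(D) : D \in \cc(G \sm \Omega)\}$. Next, to control this cover, I would anchor it on a maximum clique. Fix a maximum clique $Q$ of $G[\Omega]$; since $G$ is $K_k$-free, $|Q| \le k-1$. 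By maximality, every $v \in \Omega \sm Q$ has a non-neighbour $q(v) \in Q$, and the non-edge $v\,q(v)$ is covered by a separator $N(D)$ containing $q(v)$. Consequently $\Omega \seq Q \cup \bigcup_{q \in Q} W_q$, where $W_q := \bigcup\{N(D) : q \in N(D)\}$ is the union of all minimal separators through $q$. Since $|Q| \le k-1$ and each separator has at most $s$ vertices, the target bound $|\Omega| \le (k-1)s^3$ follows (up to absorbing the $|Q| \le k-1$ term into the estimate) once I establish that each $W_q$ is the union of at most $s^2$ separators, so that $|W_q| \le s^3$.

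The crux---and the step I expect to be the main obstacle---is precisely this bound on $W_q$: a priori a single vertex $q$ can have neighbours in arbitrarily many components of $G \sm \Omega$ and hence lie in arbitrarily many separators, so the smallness of each individual $N(D)$ is not enough on its own. Here I would exploit the structure that Lemma~\ref{lem:PMC1} and Theorem~\ref{th:PMC2} impose: in a chordal completion $G + F$ witnessing $\Omega$, every $N(D)$ is a clique, the separators through $q$ are organised along a clique tree hanging off $\Omega$, and distinct components $D$ with $q \in N(D)$ are pairwise non-adjacent across $\Omega$. The plan is to show that if $W_q$ were too large, then aggregating the components through $q$---taking the neighbourhood of a suitable union of them, which is contained in $\bigcup N(D) \seq \Omega$---would produce a single minimal separator of size larger than $s$, or else force a clique on $k$ vertices inside some $\bigcup N(D)$; either outcome contradicts the hypotheses. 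Turning this aggregation into the exact quantitative estimate (the factor $s^2$ in the number of relevant separators, which combines with the separator size $s$ to give the final $s^3$) is where I expect the real work to lie.
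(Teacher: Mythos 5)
Your setup is sound as far as it goes: the contrapositive, the bound $|N(D)| \le s$ from Lemma~\ref{lem:PMC1}, and the covering $\Omega \seq Q \cup \bigcup_{q \in Q} W_q$ via Theorem~\ref{th:PMC2}(ii) are all correct. But this reduction contains none of the lemma's difficulty; the entire content sits in the step you explicitly defer, namely $|W_q| \le s^3$, so the proposal is not a proof. Moreover, that intermediate claim is unjustified and there is no reason to expect it with this constant: each $N(D)$ is a clique in $G+F$ but not in $G$, so the separators through $q$ may contain arbitrary vertices of $\Omega$ (neighbours of $q$ included), $q$ can lie in arbitrarily many of them, and each one can contribute up to $s-1$ vertices of $\Omega$ not seen before. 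The only bound available without a genuinely new argument is $|W_q| \le |\Omega|$, which is circular. Note also that even granting $|W_q| \le s^3$, your count gives $|\Omega| \le (k-1) + (k-1)s^3$, which misses the stated bound $(k-1)s^3$.

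The aggregation sketch you offer for the crux would also fail as stated: distinct components of $G \sm \Omega$ through $q$ are pairwise non-adjacent, so their union is disconnected and its neighbourhood need not be a minimal separator; restoring connectivity by adding $q$ pulls in $N(q)$, which is uncontrolled. The paper's proof supplies exactly the missing mechanism, and it runs through the \emph{non-neighbourhood} of a vertex rather than through unions of separators. Either every $x \in \Omega$ has fewer than $s^3$ non-neighbours in $\Omega$, in which case the greedy process $A_{i+1} = A_i \cap N(x_i)$ loses at most $s^3$ vertices per step and $|\Omega| > (k-1)s^3$ forces a clique of size $k$; or some $x$ has $Y = \{x\} \cup (\Omega \sm N[x])$ with $|Y| > s^3$, and then a partition-merging process on $Y$ is run: each part is kept connected by attaching to it the components used so far, a merge is triggered by a component covering a non-edge between two parts (Theorem~\ref{th:PMC2}(ii)), and each merge glues at most $s$ parts because $|N(D)| \le s$. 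This terminates with a part $A$, $|A| > s^2$, sitting inside a connected set $X$, and a vertex $y \in Y \sm A$ anticomplete to $X$. For each $a \in A$, the component $D_a$ covering the non-edge $ya$ meets $A$ in at most $s$ vertices, so there are more than $s$ pairwise disjoint such components, and every one of them must intersect any minimal separator having $y$ in one full side and $X$ in the other; hence that separator has size larger than $s$. This Menger-type construction of a connected aggregate facing an anticomplete vertex is the idea absent from your proposal, and without it (or a substitute of comparable strength) the bound on $W_q$ remains unproven.
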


\begin{proof}
  By Lemma~\ref{lem:PMC1}, we may assume that for every
  $D \in \cc(G\sm \pmc)$ we have $|N(D)| \leq s$.

  Assume first that for every $x \in \pmc$ the set of non-neighbors of
  $x$ in $\pmc$ (i.e., $\Omega \setminus N[x]$) is of size less than $s^3$.
  Let $A_0 = \pmc$ and consider the following iterative
  process. Given $A_i$ for $i \geq 0$, pick $x_i \in A_i$, and set
  $A_{i+1} = A_i \cap N(x_i)$.  The process terminates when $A_i$
  becomes empty. Clearly, the vertices $x_0,x_1,\ldots$ induce a
  clique. Furthermore, by our assumption,
  $|A_i \setminus A_{i+1}| \leq s^3$.  Therefore this process
  continues for at least $k$ steps, giving a clique of size $k$ in
  $G$.

Thus we are left with the case when there exists $x \in \pmc$ with the set $\pmc \setminus N[x]$ of size at least $s^3$. 
Let $Y = \{x\} \cup (\pmc \setminus N[x])$; we have $|Y| > s^3$, $Y \subseteq \pmc$, and $G[Y]$ is disconnected.

Consider the following iterative process. 
At step $i$, we will maintain a partition $\mathcal{A}_i$ of $Y$ into at least two parts and for every $A \in \mathcal{A}_i$ a set $\mathcal{D}_i(A) \subseteq \cc(G\sm \pmc)$ with the following property:
the sets $\{A \cup \bigcup_{D \in \mathcal{D}_i(A)} D~|~A \in \mathcal{A}_i\}$ is the partition of $G[Y \cup \bigcup_{A \in \mathcal{A}_i} \bigcup_{D \in \mathcal{D}_i(A)} D]$
into vertex sets of connected components. In particular, for every $A \in \mathcal{A}_i$ and $D \in \mathcal{D}_i(A)$ we have $N(D) \cap Y \subseteq A$. 
We start with $\mathcal{A}_0 = \cc(G[Y])$ and $\mathcal{D}_0(A) = \emptyset$ for every $A \in \mathcal{A}_0$.

The process terminates when there exists $A \in \mathcal{A}_i$ of size
larger than $s^2$. Otherwise, we perform a step as follows.  Pick two
distinct $A,B \in \mathcal{A}_i$ and vertices $a \in A$, $b \in B$. By
the properties of $\mathcal{A}_i$, $ab \notin E(G)$.  By
Theorem~\ref{th:PMC2}, there exists $D \in \cc(G\sm \pmc)$ with
$a,b \in N(D)$.  Let
$\mathcal{A} = \{C \in \mathcal{A}_i~|~N(D) \cap C \neq \emptyset\}$.
Note that $A,B \in \mathcal{A}$. Furthermore, since $|N(D)| \leq s$,
we have $2 \leq |\mathcal{A}| \leq s$.

We define
$\mathcal{A}_{i+1} = (\mathcal{A}_i \setminus \mathcal{A}) \cup
\{\bigcup_{C \in \mathcal{A}} C\}$.  For every
$C \in \mathcal{A}_{i+1} \cap \mathcal{A}_i$ we keep
$\mathcal{D}_{i+1}(C) = \mathcal{D}_i(C)$.  Furthermore, we set
$\mathcal{D}_{i+1}(\bigcup_{C \in \mathcal{A}} C) = \{D\} \cup
\bigcup_{C \in \mathcal{A}} \mathcal{D}_i(C)$.  It is straightforward
to verify the invariant for $\mathcal{A}_{i+1}$ and
$\mathcal{D}_{i+1}$.

Furthermore, since every set $C \in \mathcal{A}_i$ is of size at most
$s^2$ while $|Y| > s^3$ we have that $|\mathcal{A}_i| > s$. Since
$2 \leq |\mathcal{A}| \leq s$, we have
$2 \leq |\mathcal{A}_{i+1}| < |\mathcal{A}_i|$. Consequently, the
process terminates after a finite number of steps with $\mathcal{A}_i$
of size at least $2$, $\mathcal{D}_i$, and some $A \in \mathcal{A}_i$
of size greater than $s^2$.

Let $X = A \cup \bigcup_{D \in \mathcal{D}_i(A)} D$ and let
$y \in Y \setminus A$. Note that $G[X]$ is connected by the invariant
on $\mathcal{A}_i$ and $\mathcal{D}_i$, $y$ exists as
$|\mathcal{A}_i| \geq 2$, and $y$ is anticomplete to $X$.  We use
Theorem~\ref{th:PMC2}: for every $a \in A$ fix a component
$D_a \in \cc(G\sm \pmc)$ covering the nonedge $ya$.  Since
$|N(D_a)| \leq s$ while $|A| > s^2$, the set
$\mathcal{D} = \{D_a~|~a \in A\}$ is of size greater than $s$.  Since
$G[X]$ is connected and $y$ is anticomplete to $X$, there exists a
minimal separator $S$ with $y$ in one full side and $X$ in the other
full side.  However, then $S \cap D \neq \emptyset$ for every
$D \in \mathcal{D}$. Hence, $|S| \geq |\mathcal{D}| > s$. This
finishes the proof of the lemma.
\end{proof}

\begin{myproof}[Proof of Theorem~\ref{th:PMC-Pilipczuk-thm}]
  Let $G$ be a graph such that it does not contain a clique on
  $k$ vertices and a minimal separator of size larger than~$s$.
  Let $F$ be a minimal chordal completion of $G$. By
  Lemma~\ref{lem:PMC3}, every maximal clique of $G + F$ is of size at
  most $(k - 1)s^3$. Therefore a clique tree of $G + F$ is a tree
  decomposition of $G$ of width at most $(k - 1)s^3 - 1$, as desired.
\end{myproof}

\section{Nested $2$-wheels}
\label{sec:technical-lemma}

Let $k \geq 0$ be an integer. A {\em $k$-wheel} is a graph formed by a
hole $H$ called the {\em rim} together with a set $C$ of $k$ vertices
that are not in $V(H)$ called the {\em centers}, such that each center
has at least three neighbors in the rim.  We denote such a $k$-wheel
by $(H, C)$.  Observe that a $0$-wheel is a hole.  A $1$-wheel is
called a \emph{wheel} (see Fig. \ref{fig:tc}).  We often write
$(H, u)$ instead of $(H, \{u\})$.

A $2$-wheel $(H, \{u, v\})$ is {\em nested} if $H$ contains two
vertices $a$ and $b$ such that all neighbors of $u$ in $H$ are in one
path of $H$ from $a$ to $b$, while all the neighbors of $v$ are in the
other path of $H$ from $a$ to $b$.  Observe that $a$ and $b$ may be
adjacent to both $u$ and $v$.  As we will see in this section, the
properties of $2$-wheels highlight structural similarities between
(theta, triangle)-free graphs and (even hole, pyramid)-free graphs, in
the sense that in both classes, apart from few exceptions, every
$2$-wheel with non-adjacent centers is nested.

For a center $u$ of a $k$-wheel $(H, C)$, a {\em $u$-sector} of $H$ is
a subpath of $H$ of length at least~1 whose ends are adjacent to $u$
and whose internal vertices are not.  However, a $u$-sector may
contain internal vertices that are adjacent to $v$ for some center
$v\neq u$.  Observe that for every center $u$, the rim of a wheel is
edgewise partitioned into its $u$-sectors.

\subsection*{In (theta, triangle)-free graphs} 

The {\em cube} is the graph formed from a hole of length~6, say
$h_1h_2 \cdots h_6h_1$ together with a vertex $u$ adjacent to $h_1$,
$h_3$, $h_5$ and a vertex $v$ non-adjacent to $u$ and adjacent to
$h_2$, $h_4$, $h_6$.  Note that the cube is a non-nested 2-wheel with
non-adjacent centers.

%Lemma 3.2
\begin{lemma}
  \label{lem:(theta, triangle)-free-2wheel-non-adjacent}
  Let $G$ be a (theta, triangle)-free graph. If $W = (H, \{u, v\})$ is
  a 2-wheel in $G$ such that $uv \notin E(G)$, then $W$ is either a
  nested wheel or the cube. 
\end{lemma}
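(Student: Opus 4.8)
The plan is to exploit triangle-freeness to make the attachments of $u$ and $v$ sparse along $H$, and then to use theta-freeness as the main engine that rules out all ``interleaving'' patterns except the cube. First I would record the easy consequence of triangle-freeness: since $u$ (resp.\ $v$) has no two adjacent neighbours on $H$, every $u$-sector and every $v$-sector has length at least~$2$; in particular $|N_H(u)|,|N_H(v)|\ge 3$ and $H$ has length at least~$6$. The key local tool is the following dichotomy. For a $u$-sector $S$ with ends $x,x'$, the set $\{u\}\cup V(S)$ induces a hole $C_S$, and the neighbours of $v$ on $C_S$ are exactly $N_H(v)\cap V(S)$ (recall $uv\notin E(G)$). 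If $v$ had exactly two neighbours $p,q$ on $C_S$, then, as $p,q$ are non-adjacent by triangle-freeness, the path $pvq$ together with the two arcs of $C_S$ between $p$ and $q$ would form a theta. Hence for every $u$-sector $S$, the vertex $v$ has either at most one, or at least three neighbours in the closed sector $\overline S$; symmetrically for the $u$-neighbours inside each closed $v$-sector.

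Next I would split into cases according to this dichotomy. Suppose some $u$-sector $S=x\dots x'$ satisfies $|N_H(v)\cap \overline S|\ge 3$. I claim then that \emph{all} neighbours of $v$ lie in $\overline S$, which immediately gives a nested wheel (cut $H$ at $x$ and $x'$: the $v$-neighbours lie on the arc $S$ and the remaining $u$-neighbours on the complementary arc, while $x,x'$ are allowed to be adjacent to both). Indeed, if $v$ had a neighbour $w$ outside $\overline S$, I would build a theta with hubs $u$ and $v$: take the two extreme $v$-neighbours $p,p'$ of $\overline S$ (closest to $x$ and to $x'$) and the induced paths $u\,x\dots p\,v$ and $u\,x'\dots p'\,v$ running inside $S$, together with a third path from $u$ through the complementary arc to some $u$-neighbour other than $x,x'$, then along the rim to $w$ and to $v$. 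A third neighbour of $v$ strictly between $p$ and $p'$ guarantees that the first two paths are internally disjoint and non-adjacent, and the complementary arc keeps the third path clear of them.

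It remains to treat the case where every $u$-sector contains at most one $v$-neighbour and, symmetrically (after disposing of the concentrated case for $v$ as above), every $v$-sector contains at most one $u$-neighbour. Here a short counting argument is decisive: writing $m=|N_H(u)|$, $n=|N_H(v)|$, letting $c$ be the number of common neighbours and summing the per-sector bounds (a common neighbour is counted in two consecutive sectors) yields $n+c\le m$ and, symmetrically, $m+c\le n$; hence $c=0$, $m=n$, and in fact each $u$-sector contains exactly one interior $v$-neighbour and vice versa. Thus the neighbours strictly alternate around $H$ as $x_1,y_1,x_2,y_2,\dots,x_m,y_m$. The final and most delicate step is to show that this forces $m=3$ and all six arcs to have length~$1$, i.e.\ the cube: I would do this with thetas whose hubs are one centre and a neighbour of the other centre, for instance with hubs $u$ and a $v$-neighbour $y$, using two short paths reaching $y$ from both sides of its sector together with a wrap-around path through $v$ that uses the far side of $H$. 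Whenever $H$ has room---either $m\ge 4$, or some arc has length at least~$2$---this wrap-around can be routed without creating chords to the two short paths, producing a theta; the cube is precisely the tight configuration in which every such wrap-around is blocked by an edge of the rim.

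The main obstacle I anticipate is exactly this last step: verifying, in the perfectly alternating case, that the three candidate paths of each attempted theta are internally disjoint, of length at least~$2$, and pairwise non-adjacent. The delicate points are the boundary situations where an arc has length~$1$, so that endpoints of different candidate paths become adjacent along the rim, and the bookkeeping needed to choose, among the several possible routings of the wrap-around path, one that avoids all such rim-edges unless the configuration is already the cube.
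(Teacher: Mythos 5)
Your architecture mirrors the paper's proof (sector analysis, then alternation, then cube-or-theta), and two of your steps are solid: the ``at most one or at least three'' dichotomy via the hole on $\{u\}\cup V(S)$ is correct, and your counting argument in the alternating case is a clean alternative to the paper's maximality argument for excluding common neighbours. But there are two genuine gaps. First, the theta you assert in Case~A does not always exist. Let $S=x\dots x'$ be the $u$-sector with at least three neighbours of $v$, let $T$ be the complementary arc of $H$, and let $x_T$, $x'_T$ be the neighbours of $x$, $x'$ on $T$. Triangle-freeness forces $ux_T, ux'_T\notin E(G)$, but it permits the configuration in which the \emph{only} neighbour of $v$ outside $\overline S$ is $x_T$ (it only forbids $v$ from being adjacent to both $x$ and $x_T$). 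In that configuration, every $u$--$v$ path with interior in $T$ has $x_T$ as its penultimate vertex, and $x_T$ is adjacent to $x$, which is unavoidably an internal vertex of your first path $u\,x\dots p\,v$ (any $u$--$v$ path through $S$ must begin with $ux$ or $ux'$). So your three paths always carry the cross edge $xx_T$, and the claim that ``the complementary arc keeps the third path clear of them'' is false. The statement of Case~A is still true, but it needs a theta with different hubs, for instance $v$ and $x$: the paths $v\,x_T\,x$, $\ v\,p\dots x$ (along $S$), and $v\,p'\dots x'\,u\,x$ work, using $u$ as an \emph{internal} vertex. In other words, the boundary phenomenon you flag only for the last step already breaks Case~A as written.

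Second, and more seriously, the final step --- producing a theta from the strictly alternating configuration whenever $m\geq 4$ or some rim arc has length at least $2$ --- is where the real content of the lemma lies, and you leave it as a plan whose verification you explicitly acknowledge you have not carried out. The paper closes this step in a few lines with one normalization that your proposal is missing: pick six \emph{consecutive} attachments $x,x',y,y',z,z'$ (belonging to $u,v,u,v,u,v$ in this cyclic order) and, by rotating or reflecting, choose them so that $xz'\notin E(G)$; such a choice exists unless the wheel is the cube. Then the three paths $vz'(H\setminus x)z$, $vy'(H\setminus y)z$, and $vx'(H\setminus y)xuz$ form a theta with hubs $v$ and $z$ --- exactly a ``centre plus attachment of the other centre'' theta of the kind you envisage, and the single condition $xz'\notin E(G)$ is what disposes of all the boundary routing problems at once. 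Without this (or an equivalent) selection argument, the alternating case, and hence the lemma, remains unproved in your proposal.
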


\begin{proof}
  \setcounter{claim}{0}
  Suppose that $W$ is not a nested wheel. We will prove that $W$ is
  the cube. 

 \begin{claim}
  \label{cl:atleast2}
  Every $u$-sector of $H$ contains at most one neighbor of $v$
  and every $v$-sector of $H$ contains at most one neighbor of $u$.
 \end{claim}

 \bpc For otherwise, without loss of generality, some $u$-sector $P= x\dots y$ of $H$
 contains at least two neighbors of $v$. Let $x', y'$ be neighbors of
 $v$ closest to $x, y$ respectively along $P$.  Note that
 $x'y' \notin E(G)$ because $G$ is triangle-free. Since $W$ is not
 nested, $H \sm P$ contains some neighbors of $v$. Note also that
 $H \sm P$ contains some neighbors of $u$.
 
 So, let $Q = z\dots z'$ be the path of $H\sm P$ that is minimal length
 and such that $uz\in E(G)$ and
 $vz'\in E(G)$.  Note that $z'$ is adjacent to either $x$ or $y$, for
 otherwise $uzQz'v$, $uxPx'v$, and $uyPy'v$ form a theta from $u$ to $v$.  So
 suppose up to symmetry that $z'$ is adjacent to $y$. So, $v$ is not
 adjacent to $y$ since $G$ is triangle-free.  It then follows that the
 three paths $vz'y$, $vy'Py$, and $vx'Pxuy$ form a theta, a
 contradiction. \epc
 
 \medskip
 
 \begin{claim}
  \label{cl:noCommon}
  $u$ and $v$ have no common neighbors in $H$.
\end{claim}

\bpc Otherwise, let $x$ be such a common neighbor.  Consider a subpath
$x\dots y$ of $H$ of maximum length with the property of being a
$u$-sector or a $v$-sector, and suppose up to symmetry that it is a
$u$-sector. By its maximality, it contains a neighbor of $v$ different
from $x$. So in total it contains at least two neighbors of $v$, a
contradiction to Claim~\ref{cl:atleast2}.  \epc

 \medskip

 Claim~\ref{cl:atleast2} and~\ref{cl:noCommon} prove that
 $|N_H(u)|=|N_H(v)|$ and the neighbors of $u$ and $v$ alternate along
 $H$.  So, let $x,y,z \in N_H(u)$ and $x',y',z' \in N_H(v)$ be
 distinct vertices in $H$ with
 $x$, $x'$, $y$, $y'$, $z$, $z'$ appearing  in this order along~$H$.  
 If $V(H) =\{x, y, z, x', y', z'\}$, then
 $V(H)\cup \{u, v\}$ induces the cube, so suppose
 $\{x, y, z, x', y', z'\}\subsetneq V(H)$.  Hence, up to symmetry, we
 may assume that $x$, $x'$, $y$, $y'$, $z$ and $z'$ are chosen such
 that: $xz' \notin E(G)$.  But then the
 three paths $vz'(H \sm x)z$, $vy'(H \sm y)z$, and $vx'(H \sm y)xuz$ form a theta, a
 contradiction.
\end{proof}

The following lemma of Radovanovi\'c and Vu\v skovi\'c shows that the
presence of the cube in a (theta, triangle)-free graph entails some
structure.

\begin{lemma}[see \cite{radovanovicV:theta}]
  \label{lem:contain-cube}
  Let $G$ be a (theta, triangle)-free graph. If $G$ contains the cube,
  then either it is the cube, or it has a clique separator of size at
  most~2.
\end{lemma}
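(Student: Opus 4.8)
The plan is to prove Lemma~\ref{lem:contain-cube}, which states that a (theta, triangle)-free graph $G$ containing the cube either is the cube or has a clique separator of size at most~2.

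The plan is to study how the vertices outside the cube attach to it, and to show that each connected piece of $G$ minus the cube hangs off the cube through a clique of size at most~$2$. Write $C=\{h_1,\dots,h_6,u,v\}$ for the vertex set of a cube in $G$, using the notation of its definition, and assume that $G$ is not the cube. It is convenient to recall that the cube is isomorphic to the $3$-dimensional hypercube: it is bipartite with sides $\{h_1,h_3,h_5,v\}$ and $\{h_2,h_4,h_6,u\}$; every independent set of size at least~$3$ lies in one side and has a common neighbour on the other side; and two non-adjacent vertices of $C$ are either at distance~$2$ (with exactly two common neighbours) or antipodal (with none). The whole argument rests on a single local claim, which I would prove first.

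\medskip
\noindent\emph{Claim: every vertex $w\in V(G)\sm C$ has at most one neighbour in $C$.} Since $G$ is triangle-free, $N_C(w)$ is an independent set of the cube. If $|N_C(w)|\geq 3$, then (being inside one side) it has a common neighbour $m$ in the other side, and for three of its members $c_1,c_2,c_3$ the paths $wc_1m$, $wc_2m$, $wc_3m$ form a theta, where $w\not\sim m$ by triangle-freeness. If $N_C(w)=\{c_1,c_2\}$ with $c_1,c_2$ at distance~$2$, their two common neighbours $m_1,m_2$ give the theta $c_1wc_2$, $c_1m_1c_2$, $c_1m_2c_2$ (here $w\not\sim m_1,m_2$ and $m_1\not\sim m_2$). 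If $c_1,c_2$ are antipodal, I exhibit two internally disjoint induced cube-paths between them with no edge joining their interiors (they exist by the symmetry of the cube; for $\{u,v\}$, for instance, $uh_3h_2v$ and $uh_5h_6v$), and together with $uwv$ they form a theta. Each case contradicts theta-freeness, proving the claim.

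\medskip
With the claim in hand I reduce to separators. If $G$ is disconnected we are done, so assume $G$ is connected and let $W=V(G)\sm C\neq\emptyset$. For each component $D$ of $G[W]$ we have $N(D)\seq C$, and since $|C|=8$ it suffices to show that $N(D)$ is a clique of size at most~$2$: then $N(D)$ is a clique separating $D$ from the nonempty set $C\sm N(D)$. Suppose not; as $G$ is triangle-free, $N(D)$ then contains two non-adjacent vertices. Choose a non-adjacent pair $c_1,c_2\in N(D)$ together with an induced path $P$ from $c_1$ to $c_2$ with interior in $D$, minimizing the length of $P$; by the claim this length is at least~$3$ (a length-$2$ path would give a vertex of $D$ with two neighbours in $C$). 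A shortest-path argument shows that the unique cube-neighbour of any interior vertex of $P$ must be a \emph{common} neighbour of $c_1$ and $c_2$: otherwise one extracts a strictly shorter induced path through $W$ between a non-adjacent pair. Hence, if $c_1,c_2$ are antipodal, the interior of $P$ is anticomplete to $C\sm\{c_1,c_2\}$, and any two internally disjoint chord-free cube-paths between $c_1,c_2$ (as in the claim) together with $P$ form a theta. If $c_1,c_2$ are at distance~$2$ with common neighbours $m_1,m_2$, and at most one of $m_1,m_2$ is adjacent to the interior of $P$, then an explicit choice of cube-paths (one of length~$2$ through the free common neighbour, one of length~$4$ through the opposite face, chosen to avoid cross-edges) again yields a theta.

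\medskip
The remaining configuration is where \emph{both} common neighbours $m_1,m_2$ of a distance-$2$ pair $c_1,c_2\in N(D)$ are adjacent to $D$; then $\{c_1,m_1,c_2,m_2\}$ is an entire $4$-face of the cube contained in $N(D)$. This is where I expect the main obstacle to lie: the two short routes between $c_1$ and $c_2$ are blocked, and no two internally disjoint induced cube-paths avoid both $m_1$ and $m_2$, so the naive theta-hunt fails. Worse, the situation is fully symmetric, since $\{m_1,m_2\}$ is itself a non-adjacent pair in $N(D)$ whose common neighbours are exactly $\{c_1,c_2\}$, so exchanging the roles of the two diagonals leads back to the same configuration. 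Ruling this case out seems to require descending into the structure of $D$ rather than routing only inside the cube: one should show that a connected $D\seq W$ cannot meet all four corners of a face without producing a theta, using the minimality of $P$ to force a shorter forbidden connection among the attachment vertices inside $D$. Once this $4$-face case is excluded, every component neighbourhood is a clique of size at most~$2$, and the desired clique separator follows.
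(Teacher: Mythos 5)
Your statement is one the paper does not actually prove: it is quoted from Radovanovi\'c and Vu\v skovi\'c, so your attempt can only be judged on its own merits. Most of it is right. The claim that every vertex outside the cube $C$ has at most one neighbour in $C$ is correctly proved in all three cases, and the reduction to showing that $N(D)$ is a clique of size at most $2$ for every component $D$ of $G \setminus C$ is a valid frame (with $|C|=8$, such an $N(D)$ is automatically a separator). The theta-hunts in the antipodal case and in the case where at most one common neighbour of a distance-$2$ pair attaches to $P$ also go through, with care in choosing the second cube-path. But the proof has a genuine gap, the one you flag yourself: the case where a distance-$2$ pair $c_1,c_2 \in N(D)$ has \emph{both} common neighbours $m_1,m_2$ adjacent to the interior of the minimal path $P$. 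You leave this case unresolved, speculating that it requires a deeper analysis of $D$, so as written the argument is incomplete.

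In fact this last case closes in a few lines with tools you already set up; no theta is needed, only the minimality of $P$ and your degree claim. Write $P = c_1 x_1 x_2 \cdots x_r c_2$, of length $r+1 \geq 3$. By your claim, $x_1$ has a unique neighbour in $C$, namely $c_1$, and likewise $x_r$ has only $c_2$; hence every neighbour of $m_1$ or $m_2$ in the interior of $P$ lies among $x_2,\ldots,x_{r-1}$, and no $x_k$ is adjacent to both $m_1$ and $m_2$. Choose indices $i,j$ with $x_i \sim m_1$, $x_j \sim m_2$ and $|i-j|$ minimum; say $i<j$. Then $m_1 x_i x_{i+1} \cdots x_j m_2$ is an induced path with interior in $D$ of length $j-i+2 \leq (r-1)-2+2 = r-1$, which is shorter than $P$ by at least two. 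Since $m_1,m_2$ is itself a non-adjacent pair in $N(D)$ (both lie in one side of the bipartition of the cube, and both have neighbours in $D$), this contradicts the minimality of $P$, so the four-face configuration never occurs. Two smaller points: your intermediate assertion that the unique cube-neighbour of \emph{any} interior vertex of $P$ is a common neighbour of $c_1,c_2$ should exclude $x_1$ and $x_r$ (whose cube-neighbours are $c_1$ and $c_2$ respectively); and the extraction of a strictly shorter path in that step again needs the observation that $x_1$ and $x_r$ cannot be adjacent to any other vertex of $C$, which is exactly what your claim provides.
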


\subsection*{In even-hole-free graphs}

Let us consider a classical generalization of even-hole-free graphs.  

A {\em prism} is a graph made of three vertex-disjoint chordless paths
$P_1 = a_1 \dots b_1$, $P_2 = a_2 \dots b_2$, $P_3 = a_3 \dots b_3$ of
length at least 1, such that $a_1a_2a_3$ and $b_1b_2b_3$ are triangles
and no edges exist between the paths except those of the two
triangles (see Fig.~\ref{fig:tc}). 
An \emph{even wheel} is a wheel $(H, u)$ such that $u$ has an even number of
neighbors in $H$.  A \emph{square} is a hole of length~4. 

It is easy to see that all thetas, prisms, even wheels, and squares contain even
holes.  The class of (theta, prism, even wheel, square)-free
graphs is therefore a generalization of even-hole-free graphs that
capture the structural properties that we need here. 

A proof of the following lemma can be found
in~\cite{chudetal:maxStEHFPyramfree} (where it relies on many
lemmas). We include here our self-contained proof for the sake of
completeness.
Call a wheel \emph{proper} if it is not pyramid.
A {\em cousin wheel} is a 2-wheel made of a hole
$H = h_1 h_2 \dots h_n h_1$ and two non-adjacent centers $u$ and $v$,
such that $N_H(u) = \{h_1,h_2,h_3\}$ and $N_H(v) = \{h_2,h_3,h_4\}$.

%Lemma 3.3
\begin{lemma}
  \label{lem:EHF-2wheel-non-adjacent-1}
  Let $G$ be a (theta, prism, pyramid, even wheel, square)-free graph. If
  $W = (H, \{u, v\})$ is a 2-wheel in $G$ such that $uv \notin E(G)$,
  then $W$ is either a nested or a cousin wheel.  Moreover, if $W$ is
  nested then $|N_H(u) \cap N_H(v)| \leq 1$.
\end{lemma}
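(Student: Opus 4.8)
The plan is to follow the strategy of the previous (triangle-free) lemma, but to replace the single obstruction \emph{theta} by the three three-path configurations theta, prism and pyramid, and to use \emph{even wheel} and \emph{square} to dispose of the degenerate cases. First I would record the elementary constraints. Since squares are excluded, $H$ has length at least $5$; since $(H,u)$ and $(H,v)$ are wheels and even wheels are excluded, $|N_H(u)|$ and $|N_H(v)|$ are odd, hence at least $3$. Moreover, every sector of $u$ or of $v$ has length $1$ or at least $3$, because a sector of length $2$ would, together with its centre, induce a square. For the common neighbours, if $w_1,w_2\in N_H(u)\cap N_H(v)$ were non-adjacent then, as $uv\notin E(G)$, the set $\{u,w_1,v,w_2\}$ would induce a square; so the common neighbours are pairwise adjacent, and since they induce a clique contained in the triangle-free graph $H$, there are at most two of them, and if there are two they are consecutive on $H$. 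This already gives the ``Moreover'' part: in a nested wheel the common neighbours lie in the two cut vertices $\{a,b\}$, so two common neighbours would force $a$ and $b$ to be adjacent on $H$, making one of the two $a$--$b$ arcs a single edge and forcing $N_H(u)$ or $N_H(v)$ into $\{a,b\}$, which is impossible since each has size at least $3$.

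For the dichotomy I would assume that $W$ is not nested and aim to show it is a cousin wheel. The crucial step is the analogue of the first claim of the previous lemma: every $u$-sector contains at most one neighbour of $v$ in its interior, and symmetrically. Suppose a $u$-sector $P=x\dots y$ (of length at least $3$ by the above) contained two neighbours of $v$; let $x',y'$ be those closest to $x$ and to $y$ along $P$. As $W$ is not nested, $H\setminus P$ carries both a neighbour of $u$ and a neighbour of $v$, so I can pick a shortest subpath $Q=z\dots z'$ of $H\setminus P$ with $uz,vz'\in E(G)$ and then form three paths joining $u$ and $v$, namely $u,x,\dots,x',v$ and $u,y,\dots,y',v$ through $P$, and $u,z,\dots,z',v$ through $Q$. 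In the triangle-free proof these three paths formed a theta. Here I would track the adjacencies among the rim vertices incident to $u$ (namely $x,y,z$) and those incident to $v$ (namely $x',y',z'$): when all the relevant rim vertices are pairwise non-adjacent the configuration is a theta; an edge on the $v$-side, such as $x'\sim y'$ (which makes $\{v,x',y'\}$ a triangle), turns it into a pyramid; edges on both sides turn it into a prism; and in the remaining degenerate subcases (for instance when $z'$ is adjacent to an end of $P$) I would reroute the three paths and obtain instead a theta, pyramid, prism, square, or even wheel. As all of these are excluded, the claim holds.

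With this claim in hand the neighbours of $u$ and $v$ interleave very rigidly: each sector of one centre contains at most one neighbour of the other, sectors have length $1$ or at least $3$, and the common neighbours form at most a single edge. A short counting argument on the resulting cyclic pattern then shows that, up to symmetry, the only non-nested arrangement compatible with all these constraints is $N_H(u)=\{h_1,h_2,h_3\}$ and $N_H(v)=\{h_2,h_3,h_4\}$ with the two common neighbours $h_2,h_3$ consecutive, which is exactly the cousin wheel; any richer interleaving lets me choose three induced paths that again form a theta, pyramid or prism. I expect this endgame, together with the adjacency bookkeeping inside the claim, to be the main obstacle: the difficulty is not any single construction but organising the case analysis so that every way of placing a second alternation around $H$, a neighbour of one centre one step too far out, or two neighbours of one centre inside a single sector of the other is matched with a concrete excluded subgraph, while carefully distinguishing in each case which of theta, pyramid and prism actually appears according to the local triangles, and accounting for the at most two consecutive common neighbours, each of which may serve as an endpoint on either side.
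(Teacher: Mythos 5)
Your preliminary observations and the ``Moreover'' part are correct, and your key claim (every $u$-sector contains at most one neighbour of $v$ in its interior, and symmetrically) is indeed true in this class. The genuine gap is in your endgame. You assert that once the key claim, the sector-length constraint, and the common-neighbour constraint are recorded, ``a short counting argument on the resulting cyclic pattern'' shows that the only non-nested arrangement is the cousin wheel. That is not so: a perfect alternation, say $N_H(u)=\{x,y,z\}$ and $N_H(v)=\{x',y',z'\}$ with $x,x',y,y',z,z'$ in this cyclic order on $H$ and all six arcs of length at least $3$, satisfies every constraint you have derived (each sector of one centre has exactly one neighbour of the other in its interior, all sectors are long, there are no common neighbours, both degrees are odd), and the same is true of longer alternations with five neighbours per centre; yet none of these is nested or a cousin wheel. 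Ruling them out is not a counting matter but requires constructing explicit forbidden subgraphs all over again. This is precisely what the paper does for the alternating pattern: two thetas force $x'y\in E(G)$ and $x'z\in E(G)$, and then $\{u,y,x',z\}$ induces a square. So the constraints you accumulate do not pin down the cousin wheel, and your final step, as described, fails.

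A second, smaller gap: the proof of your key claim is sketched exactly where it is hard. Without triangle-freeness you lose the automatic non-adjacencies that make the corresponding claim in the (theta, triangle)-free lemma go through: now $x'y'$ may be an edge, $z$ or $z'$ may be adjacent to $x$ or $y$, $z$ may equal $z'$, and $v$ may be adjacent to $x$ or $y$ (common neighbours are permitted here), with several of these occurring at once; ``rerouting'' is a statement of intent, not an argument. It is worth noting that the paper never proves your claim at all. It instead splits on $|N_H(u)|=3$ versus $|N_H(u)|,|N_H(v)|\geq 5$ (legitimate because even-wheel-freeness forces odd degrees), obtains the cousin wheel only in the degree-$3$ case (using pyramid-freeness to force the two neighbours of $v$ inside the relevant sector to be adjacent and to include $x$ or $z$), and in the $\geq 5$ case derives a contradiction via a minimality argument: choose the $u$-sector whose interior contains the fewest neighbours of $v$, show both centres have neighbours in the complementary path, and build a connecting path yielding a theta or pyramid. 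If you wish to keep your route, you must both complete the degenerate cases of the claim and replace the counting endgame with excluded-subgraph arguments of comparable weight.
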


\begin{proof}  
 \setcounter{claim}{0} 
 
 In the case where $W = (H, \{u, v\})$ is nested, it must be that 
 $|N_H(u) \cap N_H(v)| \leq 1$, for otherwise $G$ would contain a square. 
 Since $G$ contains no even wheel, it is sufficient
 to consider the following cases.
    
  \medskip
  \noindent \textbf{Case 1:} $N_H(u) =3$ or $N_H(v)=3$.

  \vspace{.5ex}
  Assume that $W$ is not a nested
  wheel.  We will prove that $W$ is a cousin wheel.
  Without loss of generality, we may assume that $|N_H(u)| = 3$, and let
  $N_H(u) = \{x, y, z\}$.  We denote by $P_x=y\dots z$, $P_y=x\dots z$
  and $P_z=x\dots y$ the three $u$-sectors of $H$.  

  Suppose $xyz$ is a path of $H$. Then $v$ must be adjacent to $y$, for otherwise
  $W$ is nested, a contradiction. Since $V(H) \cup \{u\}$ and $V(H \sm y) \cup \{u,v\}$
  do not induce an even wheel, $v$ has exactly two neighbors in $P_y$. Moreover, the two neighbors 
  of $v$ in~$P_y$ are adjacent, for otherwise $H\sm y$, $u$, and $v$ form
  a theta. Since $(H,v)$ is not a pyramid, this means that one of $x$ or $z$
  is a neighbor of~$v$. Therefore, $W$ is a cousin wheel.

  Now suppose that $\{x, y, z\}$ does not
  induce a path. So $xy$, $yz$, and $zx$ are non-edges.
  Note that $v$ is adjacent to at most one of $x$, $y$, or $z$, because
  $G$ contains no square. Up to symmetry, assume that $vx\notin E(G)$. 
  Let $R$ be the $v$-sector of $H$ which contains $x$ (in its interior). 
  Since $(H, \{u, v\})$ is not a nested wheel, the ends of $R$ are not both in
  $P_x$, or both in $P_y$, or both in $P_z$.
  So assume that $R=y'\dots z'$
  with $z'$ is in the interior of $P_z$ and $y'$ is not in $P_z$.  If
  $y'$ is in $P_x$, then $R$, $u$, and $v$ form a theta from $x$ to $z$, a contradiction.
  Hence, $y'$ is not in $P_x$, so $y'$ is in the interior of $P_y$.
  
  Call $x'$ the neighbor of $v$ in $H$ different from $y'$ and $z'$.
  If $x'$ is not in the interior of $P_x$, then $P_x$ is
  contained in the $v$-sector $x'Hz'$. Thus, there exists a $v$-sector $S$
  which contains $P_x$. In particular, the hole made of $S$ and $v$ contains
  two non adjacent neighbors of~$u$, namely $y$ and $z$. Hence, $S$, $u$, and $v$
  form a theta from $y$ to $z$.
  So, $x'$ is in the interior of $P_x$.
  
  This means $x$, $y'$, $z$, $x'$, $y$, $z'$ appear in this order along $H$. 
  If $x'z \notin E(G)$, then the paths $x'(H \sm y)z$, $x'(H \sm z)yuz$, and
  $x'vy'(H \sm x)z$ form a theta from $x'$ to $z$, a contradiction.
  So, $x'z \in E(G)$. By symmetry, $x'y \in E(G)$. But then,
  $\{u,y,x'z\}$ induces a square, a contradiction. 
  
  \medskip
  \noindent \textbf{Case 2:} $N_H(u)\geq 5$ and $N_H(v) \geq 5$
 
  \medskip
  
  For a contradiction, suppose that $(H, \{u,v\})$ is not a nested wheel. 
  First of all, we have $N_H(u) \neq N_H(v)$, for otherwise $u$, $v$, and two non-adjacent vertices of
  $N_H(u)$ would form a square.
  So in $H$, there exists 
  a neighbor of $v$ that is not adjacent to $u$. It is therefore well
  defined to consider the $u$-sector $P = x\dots y$ of $H$ whose
  interior contains $k\geq 1$ neighbors of $v$, and to choose such a
  sector with $k$ minimum.
  We denote by $x'$ the neighbor of $x$ in $H\sm P$, by $y'$ the
  neighbor of $y$ in $H\sm P$ and by $Q = x' \dots y'$ the path $H\sm P$.
 
  Note that $u$ has some neighbor in the interior of $Q$,
  because $u$ has at least~5 neighbors in $H$.  We now show that $v$
  also has some neighbor in the interior of $Q$.  Suppose that it is
  not the case.  Then, the neighborhood of $v$ in~$H$ is completely
  contained in $V(P) \cup \{x',y'\}$.  Since $(H, \{u, v\})$ is not a
  nested wheel, $v$ is adjacent to $x'$ or $y'$ --- and in fact to
  both of them, for otherwise the hole $uxPyu$ would contain an even
  number (at least~4) of neighbors of~$v$, thus inducing an even
  wheel, a contradiction.  Now since $\{u, v, x, y\}$ does not induce
  a square, up to symmetry we may assume that $vx\notin E(G)$. Since
  $|N_H(v)|\geq 5$, $v$ has at least 2 neighbors in the interior of
  $P$, and so $k\geq 2$.  Note that $u$ is adjacent to~$x'$, for
  otherwise, $x'$ would be the unique neighbor of $v$ in the interior
  of a $u$-sector, contradicting the minimality of $k$. Since
  $\{u, v, x', y'\}$ does not induce a square, we know that $u$ is not
  adjacent to~$y'$. But then, $y'$ is the unique neighbor of $v$ in
  the interior of some $u$ sector, a contradiction to the minimality
  of $k$.  This proves that $v$ has some neighbor in the interior of
  $Q$. 
   
%  \begin{claim}
%    \label{cl:intQu}
%    $u$ has a neighbor $z$ in the interior of $Q$.
%  \end{claim}
%  \bpc
%  Clear since $u$ has at least~5 neighbors in $H$.
%  \epc

%  \begin{claim}
%    \label{cl:intQv}
%    $v$ has a neighbor $z'$ in the interior of $Q$.
%  \end{claim}
%
%  \bpc Otherwise, $N_H(v) \subseteq V(P) \cup \{x', y'\}$.  If
%  $N_H(v) \subseteq V(P)$, then $(H, \{u, v\})$ is a nested wheel, so
%  $v$ is adjacent to at least one of $x'$, $y'$, and in
%  fact to both of them, for otherwise one of $(H, v)$ or $(uxPyu, v)$ is
%  an even wheel.  Since $\{u, v, x, y\}$ does not induce a square, we
%  may assume that $vx\notin E(G)$. Since $|N_H(v)|\geq 5$, $v$ has at
%  least 2 neighbors in the interior of $P$, so $k\geq 2$.  So, $ux'\in
%  E(G)$, for otherwise, $x'$ is the unique neighbor of $v$ in the
%  interior of a $u$-sector, contradicting the minimality of $k$. So,
%  $uy'\notin E(G)$ for otherwise, $\{u, v, x', y'\}$ induces a
%  square. Hence, $y'$ is the unique neighbor of $v$ in the interior of
%  some $u$ sector, a contradiction to the minimality of $k$. 
%  \epc

 By the fact that each of $u$ and $v$ has some neighbor in the interior of $Q$, 
 a path $S$ from $u$ to $v$ whose interior is in the interior of $Q$ exists.  Let $x''$
 (resp.\ $y''$) be the neighbor of $v$ in $P$ closest to $x$ (resp.\ $y$) along $P$.  
 If $x''=y''$, then $x''$ is an internal vertex of $P$, and so $S$ and $P$ form a
 theta from $u$ to $x''$.  If $x''y''\in E(G)$, then $S$ and $P$ form a
 pyramid.  If $x''\neq y''$ and $x''y''\notin E(G)$, then $S$,
 $uxPx''v$, and $uyPy''v$ form a theta from $u$ to $v$. Each of the cases
 yields a contradiction; this completes the proof.
\end{proof}

\section{Bounding the treewidth}
\label{sec:bounding-tw}

In this section, we prove that the treewidth is bounded in (theta,
triangle, $S_{i, j, k}$)-free graphs and in (even hole, pyramid,
$K_t$, $S_{i, j, k}$)-free graphs.

For (theta, triangle)-free graphs, by
Lemma~\ref{lem:contain-cube}, we may assume that the graphs we work on
are cube-free since the cube itself has small treewidth, and clique
separators of size at most~2 in some sense preserve the treewidth
(this will be formalized in the proofs).  For (even hole, pyramid)-free
graphs, recall that we work from the start in a superclass, namely
(theta, prism, pyramid, even wheel, square)-free graphs.

Since our proof is the same for (theta, triangle, $S_{i, j, k}$)-free
graphs and (even hole, pyramid, $K_t$, $S_{i, j, k}$)-free graphs, to
avoid duplicating it, we introduce a class $\cal C$ that contains all
the graphs that we need to consider while entailing the structural
properties that we need.

Call \emph{butterfly} a wheel $(H, v)$ such that
$N_H(v) = \{a, b, c, d\}$ with $ab\in E(G)$, $bc\notin E(G)$,
$cd\in E(G)$ and $da\notin E(G)$.  Let $\cal C$ be the class of all
(theta, prism, pyramid, butterfly)-free graphs such that every 2-wheel
with non-adjacent centers is either a nested or a cousin wheel.

\begin{lemma}
  \label{l:inC}
  If $G$ is a (theta, triangle, cube)-free graph or a (theta, prism,
  pyramid, even wheel, square)-free graph, then $G\in \cal C$.
\end{lemma}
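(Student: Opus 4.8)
The plan is to verify that each of the two hypothesized graph classes satisfies all four forbidden-subgraph conditions defining $\cal C$ (no theta, no prism, no pyramid, no butterfly) together with the structural condition that every 2-wheel with non-adjacent centers is nested or cousin. I would handle the two classes separately, since the bookkeeping differs, but in both cases the key observation is that $\cal C$ is defined by \emph{weaker} exclusions than each hypothesis, so most conditions are immediate and the real content lies in the 2-wheel condition, which is exactly what Lemmas~\ref{lem:(theta, triangle)-free-2wheel-non-adjacent} and~\ref{lem:EHF-2wheel-non-adjacent-1} provide.

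\textbf{The (theta, triangle, cube)-free case.} First I would observe that theta is excluded directly. For prism, pyramid, and butterfly: each of these graphs contains a triangle (the prism and pyramid contain one by definition, and a butterfly $(H,v)$ with $N_H(v)=\{a,b,c,d\}$ and $ab\in E(G)$ yields the triangle $vab$). Hence a triangle-free graph is automatically prism-free, pyramid-free, and butterfly-free. It remains to check the 2-wheel condition: let $W=(H,\{u,v\})$ be a 2-wheel with $uv\notin E(G)$. By Lemma~\ref{lem:(theta, triangle)-free-2wheel-non-adjacent}, $W$ is either nested or the cube; but $G$ is cube-free, so $W$ must be nested, which in particular is either nested or cousin. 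Thus $G\in\cal C$.

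\textbf{The (theta, prism, pyramid, even wheel, square)-free case.} Here theta, prism, and pyramid are excluded directly. For butterfly: a butterfly $(H,v)$ has $N_H(v)=\{a,b,c,d\}$, four neighbors on the rim, so it is an even wheel and is therefore excluded. For the 2-wheel condition, I would invoke Lemma~\ref{lem:EHF-2wheel-non-adjacent-1} directly: its hypotheses are exactly (theta, prism, pyramid, even wheel, square)-free, and its conclusion states that any 2-wheel with non-adjacent centers is nested or cousin. Thus $G\in\cal C$.

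\textbf{The main obstacle} is essentially nonexistent from a difficulty standpoint, since the two structural lemmas do all the heavy lifting; the only place requiring genuine care is making sure the ``easy'' containments are argued correctly --- in particular confirming that the butterfly always contains a triangle (for the first class) and always is an even wheel (for the second), and confirming that the cube-freeness assumption in the first class is precisely what eliminates the exceptional cube outcome of Lemma~\ref{lem:(theta, triangle)-free-2wheel-non-adjacent}. One subtlety worth stating explicitly is that $\cal C$ is defined over a fixed ambient graph $G$, so the forbidden configurations (butterfly, etc.) are induced subgraphs of $G$ and the edge relations $ab\in E(G)$ in the butterfly definition refer to $G$; I would keep this consistent when verifying each exclusion.
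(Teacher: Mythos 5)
Your proposal is correct and follows essentially the same route as the paper's own proof: both reduce the exclusions to the observations that prisms, pyramids, and butterflies contain triangles (first class), that a butterfly is an even wheel (second class), and both invoke Lemma~\ref{lem:(theta, triangle)-free-2wheel-non-adjacent} (with cube-freeness discarding the cube outcome) and Lemma~\ref{lem:EHF-2wheel-non-adjacent-1} for the 2-wheel condition. Your write-up is merely a little more explicit in justifying the easy containments.
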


\begin{proof}
  If $G$ is a (theta, triangle, cube)-free graph, then $G$ is
  theta-free and (prism, pyramid, butterfly)-free (because prisms,
  pyramids, and butterflies contain triangles).  Furthermore, every
  2-wheel with non-adjacent centers is a nested wheel 
  by Lemma~\ref{lem:(theta, triangle)-free-2wheel-non-adjacent}.

  If $G$ is a (theta, prism, pyramid, even wheel, square)-free graph,
  then $G$ is (theta, prism, pyramid)-free and butterfly-free (because
  a butterfly is an even wheel).  Furthermore, every 2-wheel with
  non-adjacent centers is either a nested or a cousin wheel by
  Lemma~\ref{lem:EHF-2wheel-non-adjacent-1}.

  Hence $G\in \cal C$ as claimed.
\end{proof}

For our proof, we need a special kind of $k$-wheel. A {\em
  $k$-span-wheel} is a $k$-wheel $(H,C)$ that satisfies the following
properties.

\begin{itemize}
\item There exist two non-adjacent vertices $x, y$ in $H$ and we denote by
  $P_A = a_1\dots a_\alpha$ and $P_B = b_1 \dots b_\beta$ the two
  paths of $H$ from $x$ to $y$, with $x=a_1=b_1$ and
  $y=a_\alpha = b_\beta$.
\item $C\cup \{x, y\}$ is an independent set. 
\item There exists an ordering of vertices in $C$, namely
  $v_1,v_2,\cdots,v_{k}$.
\item Every vertex of $C$ has neighbors in the interiors of both $P_A$
  and $P_B$ (and at least 3 neighbors in $H$ since $(H, C)$ is a
  $k$-wheel).
\item For every $1\leq i < j \leq k$ and $1\leq i', j' \leq \alpha$,
  if $v_ia_{i'}\in E(G)$ and $v_ja_{j'}\in E(G)$ then $i'\leq j'$.
\item For every $1\leq i < j \leq k$ and $1\leq i', j' \leq \beta$, 
  if $v_ib_{i'}\in E(G)$ and $v_jb_{j'}\in E(G)$ then $i'\leq j'$. 
\end{itemize}

Informally, a $k$-span-wheel is such that, walking from $x$ to $y$
along both $P_A$ and $P_B$, one first meets all the neighbors of
$v_1$, then all neighbors of $v_2$, and so on until $v_k$.  Observe
that a $1$-span-wheel is a wheel, $2$-span-wheel is a nested 2-wheel.
Note that distinct $v_i$ and $v_j$ may share common neighbors on $H$
(it is even possible that
$N_{P_A}(v_1) = \dots = N_{P_A}(v_k) = \{a_i\}$).

Observe that in the following theorem, thetas, pyramids, prisms, and
butterflies have to be excluded, since they do not satisfy the conclusion.

\begin{lemma}
  \label{th:stable-induces-wheel}
  Let $G$ be a graph in $\cal C$. Let $C$ be a minimal separator in
  $G$ of size at least~2 that is furthermore an independent set, and $A$ and
  $B$ be components of $G\sm C$ that are full to $C$.  Then:

  \begin{enumerate}
  \item\label{i:stuffOne} There exist two vertices $x$ and $y$ in $C$,
    a path $P_A$ from $x$ to $y$ with interior in $A$, and a path
    $P_B$ from $x$ to $y$ with interior in $B$ such that all vertices
    in $C\sm \{x, y\}$ have neighbors in the interior of both $P_A$
    and $P_B$.  Note that $V(P_A) \cup V(P_B)$ induces a hole that we
    denote by $H$.
    
  \item\label{i:stuffTwo} $(H, C\sm \{x, y\})$ is a
    $(|C|-2)$-span-wheel.
  \end{enumerate}
\end{lemma}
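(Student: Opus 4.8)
The plan is to prove the two items in order: first construct the hole $H$ together with the endpoints $x,y$ and the domination property of part~\ref{i:stuffOne}, and then read off the span-wheel ordering of part~\ref{i:stuffTwo} from the $2$-wheel dichotomy available in $\cal C$. Throughout I use one free structural fact: since $A$ and $B$ are distinct components of $G\sm C$ there are no edges between $A$ and $B$, and since $C$ is independent any two vertices of $C$ are non-adjacent. Hence, for any $x,y\in C$, an induced path $P_A$ from $x$ to $y$ with interior in $A$ (which exists because $A$ is connected and full to $C$) and an induced path $P_B$ from $x$ to $y$ with interior in $B$ automatically combine into an induced hole $H=P_A\cup P_B$: the only potential chords would join $\mathrm{int}(P_A)$ to $\mathrm{int}(P_B)$, or $x,y$ to a non-consecutive vertex, and both are excluded by the component structure and by $P_A,P_B$ being induced. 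So the whole difficulty of part~\ref{i:stuffOne} is to choose $x,y$ and the two paths so that \emph{every} vertex of $C\sm\{x,y\}$ has a neighbour in the interior of each of $P_A$ and $P_B$.

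For part~\ref{i:stuffOne} I would argue by an extremal choice. Among all pairs $(x,y)\in C^2$ and all pairs of induced paths $(P_A,P_B)$ as above, pick one maximising the number of \emph{doubly covered} centers, that is vertices $w\in C\sm\{x,y\}$ with a neighbour both in $\mathrm{int}(P_A)$ and in $\mathrm{int}(P_B)$, breaking ties by minimising total length. Suppose some $w\in C\sm\{x,y\}$ is not doubly covered, say $w$ has no neighbour in $\mathrm{int}(P_A)$; as $w$ is non-adjacent to $x,y$ it is anticomplete to $P_A$. Using that $w$ has a neighbour in $A$ and $A$ is connected, take a shortest induced path from $w$ into $A$ meeting $P_A$, and reroute $P_A$ through it, possibly promoting $w$ to a new endpoint. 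The point is that any such local modification either produces a pair covering strictly more centers, contradicting maximality, or creates a forbidden configuration: depending on how the detour attaches to the hole and to the other centers one obtains a theta, a prism, a pyramid, a butterfly, or a $2$-wheel with non-adjacent centers that is neither nested nor a cousin wheel, each of which is impossible since $G\in\cal C$. This case analysis, driven by the attachment pattern of the detour, is the heart of the lemma and the step I expect to be the main obstacle.

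Once $H$ is built with every center covered on both sides, I first upgrade ``at least one neighbour per side'' to ``at least three neighbours on $H$'', as required by the wheel definition. Indeed, if a center $w$ had exactly one neighbour $p\in\mathrm{int}(P_A)$ and one neighbour $q\in\mathrm{int}(P_B)$, then the two arcs of $H$ from $p$ to $q$ together with the path $p\,w\,q$ would be three internally disjoint induced paths of length at least $2$ (each arc passes through $x$ or through $y$), that is, a theta from $p$ to $q$, which is impossible. Thus each $w\in C\sm\{x,y\}$ has at least three neighbours on $H$, so $(H,w)$ is a wheel and $(H,C\sm\{x,y\})$ is a $(|C|-2)$-wheel.

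It remains to establish the ordering conditions of the span-wheel definition (part~\ref{i:stuffTwo}). Fix two centers $u,v\in C\sm\{x,y\}$. Since $C$ is independent, $uv\notin E(G)$, so $(H,\{u,v\})$ is a $2$-wheel with non-adjacent centers; as $G\in\cal C$ it is nested or a cousin wheel. The cousin case is impossible: there all neighbours of $u$ lie among four consecutive vertices of $H$, but $u$ has neighbours in both $\mathrm{int}(P_A)$ and $\mathrm{int}(P_B)$, and any set of consecutive hole-vertices meeting both interiors must contain $x$ or $y$, to which $u$ is non-adjacent. Hence $(H,\{u,v\})$ is nested, say via $a,b\in V(H)$. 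Because both $u$ and $v$ have neighbours on each of $P_A$ and $P_B$, we get $\{a,b\}\neq\{x,y\}$ and $\{a,b\}$ is not contained in a single path; so one of $a,b$ lies in $\mathrm{int}(P_A)$ and the other in $\mathrm{int}(P_B)$, and the two $a$-$b$ arcs are exactly the ``$x$-side'' and the ``$y$-side'' of $H$. Consequently the neighbourhoods of $u$ and $v$ are linearly separated along \emph{both} $P_A$ and $P_B$, consistently: one of them precedes the other, with overlap in at most the single split vertex on each path. This pairwise relation is transitive, by comparing first and last neighbour indices along $P_A$ and likewise along $P_B$, hence it extends to a total order $v_1,\dots,v_{|C|-2}$ of the centers that simultaneously satisfies the two monotonicity conditions of the span-wheel definition; ties, including equal neighbourhoods, are permitted by the weak inequalities there. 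This shows that $(H,C\sm\{x,y\})$ is a $(|C|-2)$-span-wheel, completing the proof.
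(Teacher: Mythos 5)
Your part~\ref{i:stuffTwo} argument is essentially sound, but part~\ref{i:stuffOne} --- which is the heart of the lemma --- is not proved: you set up an extremal choice (maximize the number of doubly covered centers) and then assert that rerouting through an uncovered center $w$ ``either produces a pair covering strictly more centers \dots or creates a forbidden configuration,'' explicitly deferring that case analysis as ``the main obstacle.'' That case analysis \emph{is} the lemma. Two concrete defects. First, your reroute as described modifies only $P_A$ (``take a shortest induced path from $w$ into $A$ meeting $P_A$, and reroute $P_A$ through it, possibly promoting $w$ to a new endpoint''); but if $P_A$ is rerouted to end at $w$ while $P_B$ still ends at $x$ and $y$, the two paths no longer share their endpoints and $V(P_A)\cup V(P_B)$ is no longer a hole. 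Any correct modification must redirect \emph{both} paths through $w$ simultaneously; this is why the paper's proof (an induction on $|C|$ whose inductive step is exactly your extremal step) joins the new vertex $z$ to the interiors of both $P_A$ and $P_B$ by a single path $Q=z_AQ_AzQ_Bz_B$, chosen minimum over all configurations, and replaces both paths at once. Second, even with the two-sided reroute, the count of covered centers does not increase for free: the demoted endpoint $x$ must itself acquire neighbors in the interiors of both new paths, and no previously covered center may lose coverage. In the paper this is precisely the final claim of the argument ($a'x,b'x\in E(G)$, where $a'$, $b'$ are the extreme attachments of $Q$; since $P_A$, $P_B$ are chordless this forces $a'$, $b'$ to be the neighbors of $x$ on those paths, so the old interiors sit inside the new interiors and $x$ is covered by $a'$ and $b'$), and proving it requires the whole chain of intermediate claims eliminating thetas, prisms, pyramids, butterflies, and non-nested, non-cousin $2$-wheels, together with the claim that exactly one of $x,y$ attaches to the interior of $Q$. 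Your proposal contains no substitute for these steps, and they do not follow from soft arguments: without them a reroute can simply uncover other centers, and the extremal count need not increase.

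For the record, your treatment of part~\ref{i:stuffTwo} (granting part~\ref{i:stuffOne}) is correct and takes a mildly different route from the paper's: you rule out the cousin wheel exactly as the paper does (three consecutive neighbors meeting both interiors would have to include $x$ or $y$), but then you derive from nestedness that the two split vertices lie one in each interior, obtain a pairwise comparability relation, and linearize it by transitivity of the ``last neighbor of $u$ precedes first neighbor of $v$'' relation; the paper instead builds the ordering greedily, repeatedly extracting a center minimizing its first attachment on $P_A$ and then on $P_B$. Your theta argument showing each center has at least three neighbors on $H$ also matches the paper's parenthetical remark. So the pairwise-comparability linearization is a legitimate (arguably cleaner) alternative for item~\ref{i:stuffTwo}, but since it presupposes the hole and the two-sided coverage of item~\ref{i:stuffOne}, the proposal as a whole has a genuine gap.
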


\begin{proof}
  \setcounter{claim}{0} 

  We first prove~\ref{i:stuffOne}, by induction on $k=|C|$.

  If $k=2$, then $x$, $y$, $P_A$, and $P_B$ exist from the connectivity
  of $A$ and $B$, and the conditions on $C\sm \{x, y\}$ vacuously
  hold.  So suppose the result holds for some $k\geq 2$, and let us
  prove it for $k+1$.  Let $z$ be any vertex from $C$, and apply the
  induction hypothesis to $C\sm z$ in $G\sm z$.  This provides two
  vertices $x, y$ in $C\sm z$ and two paths $P_A$ and $P_B$.  We
  denote by $H$ the hole formed by $P_A$ and~$P_B$.

  \begin{claim}
    \label{c:indHyp}
    Every vertex in $C\sm \{x, y, z\}$ has neighbors in the interior
    of both $P_A$ and $P_B$.
  \end{claim}

  \bpc
  Follows directly from the induction hypothesis. 
  \epc \medskip
  
  Since $z$ has a neighbor in $A$ and $A$ is connected, there exists a
  path $Q_A = z\dots z_A$ in $A\cup \{z\}$, such that $z_A$ has a
  neighbor in the interior of $P_A$.  A similar path $Q_B$ exists.  We
  set $Q = z_AQ_A z Q_B z_B$.  We suppose that $x$, $y$, $P_A$, $P_B$,
  $Q_A$, and $Q_B$ are chosen subject to the minimality of $Q$.
  
  Observe that $Q$ is a chordless path by its minimality and the fact that $A$
  and $B$ being anticomplete. The minimality of $Q$ implies that the
  interior of $Q$ is anticomplete to the interior of $P_A$ and to the
  interior of $P_B$.

  \medskip
  
  \begin{claim}
    \label{cl:Qal1}
    We may assume that $Q$ has length at least~1.  
  \end{claim}

  \bpc Otherwise, $z=z_A=z_B$, so $z$ has neighbors in the interior of
  both $P_A$ and $P_B$.  Hence, by Claim~\ref{c:indHyp}, $x$, $y$,
  $P_A$, and $P_B$ satisfy~\ref{i:stuffOne}.  \epc \medskip

 Let $a$ (resp.\ $a'$) be the neighbor of $z_A$ in $P_A$ closest to
  $x$ (resp.\ to $y$) along $P_A$.  Let $b$ (resp.\ $b'$) be the
  neighbor of $z_B$ in $P_B$ closest to $x$ (resp.\ to $y$) along~$P_B$. 
  
  \medskip
  
  \begin{claim}
    \label{cl:aneqap}
    If $a\neq a'$ and $aa'\notin E(G)$, then $z=z_A$.    
    If $b\neq b'$ and $bb'\notin E(G)$, then $z=z_B$.  
  \end{claim}

  \bpc We give a proof only for the statement of~$a$, since
  the proof for $b$ is similar.
  
  For suppose $a\neq a'$, $aa'\notin E(G)$, and $z\neq z_A$,
  let $z'$ be the neighbor of $z_A$ in $Q$.
  Set $P'_A=xP_Aaz_aa'P_Ay$ and $Q'=z'Qz_B$.  Let us prove that $x$,
  $y$, $P'_A$, $P_B$, and $Q'$ contradict the minimality of $Q$.
  Obviously, $Q'$ is shorter than $Q$, so we only have to prove that
  every vertex in $C\sm \{z\}$ has neighbors in the interior of both
  $P'_A$ and $P_B$.  For $P_B$, it follows from Claim~\ref{c:indHyp}.
  So suppose for a contradiction that a vertex $c\in C\sm \{z\}$ has
  no neighbor in the interior of $P'_A$.  Since by
  Claim~\ref{c:indHyp} $c$ has a neighbor $c'$ in the interior of
  $P_A$, $c'$ is an internal vertex of $aP_Aa'$.  Since $G$ is
  theta-free, $(H, z_A)$ is a wheel.  Note that $(H, \{c, z_A\})$ is
  not nested because of $c'$ and some neighbor of $c$ in the interior
  of $P_B$ (i.e.\ the neighborhood of~$c$ in~$H$ is not contained in a
  unique $z_A$-sector).  
  Since $G \in \mathcal{C}$, by Lemma~\ref{lem:EHF-2wheel-non-adjacent-1},
  $(H, \{c, z_A\})$ is a cousin wheel. 
  Since $c$ has neighbors in the interiors of both $P_A$ and $P_B$,
  this means that $x$ or $y$
  is a common neighbor of $c$ and $z_A$, a contradiction to $C$ being
  an independent set.
  The proof for the latter statement (with $b$) is similar. 
  \epc \medskip

  \begin{claim}
    \label{cl:justy}
    We may assume that $x$ has neighbors in the interior of $Q$ and
    $y$ has no neighbor in the interior of $Q$.
  \end{claim}

  \bpc We show that if it is not the case, then there is a contradiction.
  For suppose both $x$ and $y$ have a neighbor in the interior of~$Q$, then
  a path of minimal length from $x$ to $y$ with interior in the
  interior of $Q$ form a theta together with $P_A$ and $P_B$, a contradiction.  
  
  Now
  suppose that none of $x$ and $y$ has a neighbor in the interior of $Q$.
  Recall that Claim~\ref{cl:Qal1} tells us that $z_A\neq z_B$. So either $z\neq z_A$ or
  $z\neq z_B$. Up to symmetry, we may assume that $z \neq z_A$.
  Hence by Claim~\ref{cl:aneqap}, either $a=a$ or $aa' \in E(G)$.
  
  Suppose $a=a'$. This implies that $a$ is in the interior
  of $P_A$.  If $b=b'$, then $b$ is in the interior of $P_B$ --- so $H$
  and $Q$ form a theta from $a$ to~$b$; if $bb'\in E(G)$, then $H$
  and $Q$ form a pyramid; and if $b\neq b'$ and $bb'\notin E(G)$, then
  $aP_AxP_Bbz_B$, $aP_AyP_Bb'z_B$, $az_AQz_B$ form a theta from $a$ to
  $z_B$ (note that $az_AQz_B$ has length at least~2 because $z_A \neq z_B$), 
  a contradiction. So, $aa'\in E(G)$. 
  
  Suppose that $bb'\in E(G)$.
  Note that $|\{a, a'\}\cap \{b, b'\}\cap \{x, y\}| \neq 2$, 
  because $x$ and $y$ are not adjacent.
  Moreover, $|\{a, a'\}\cap \{b, b'\}\cap \{x, y\}| \neq \emptyset$,
  for otherwise $H$ and $Q$ form a prism.
  So,~$|\{a, a'\}\cap \{b, b'\}\cap \{x, y\}|=1$. In
  this last case, we suppose up to symmetry that $x=a=b$.  So, $z$ is
  in the interior of $Q$ since it is non-adjacent to $x$ --- 
  in particular $Q$ has length at least~2. Hence, $H$ and $Q$ form a
  butterfly (with $x = a = b$ being the center), a contradiction.  
  
  So, $bb'\notin E(G)$. If $b=b'$, then $b$ is in the interior of~$P_B$;
  thus $P_A$, $P_B$, and $Q$ form a pyramid (i.e.\ $3PC(az_Aa', b)$, a contradiction.
  So, $b \neq b'$, and hence by Claim~\ref{cl:aneqap}, $z_B = z$.
  This means that $b \neq x$ and $b' \neq y$ (because $C$ is an independent set).
  Therefore, $aP_AxP_Bbz$, $a'P_AyP_Bb'z$, and $z_AQz$ form a pyramid 
  (i.e.\ $3PC(aa'z_A,z)$), a contradiction. 
  
  So, each case leads to a contradiction.
  Hence, exactly one of $x$ or $y$ has neighbors in the interior of
  $Q$, and up to symmetry we may assume it is $x$. 
 \epc  \medskip

  \begin{claim}
    \label{cl:apx}
     $a'x \in E(G)$ and $b'x\in E(G)$.  
  \end{claim}

  \bpc
  First, suppose $z_A$ is adjacent to~$x$, i.e.\ $a = x$.  Then, $z_A\neq z$ since $C$ is an
  independent set.  Note that $a=a'$ is impossible since $z_A$ has neighbors in
  the interior of~$P_A$. So, by Claim~\ref{cl:aneqap}, $a'x\in
  E(G)$. 
  %By symmetry, we can prove that $b'x\in E(G)$ when $z_B$ is adjacent to~$x$.
  %So, we may assume that $z_A$ is not adjacent to~$x$ and $z_B$ is not adjacent to~$x$.
  
  Now suppose $z_A$ is not adjacent to~$x$.  By Claim~\ref{cl:justy}, $x$ has a
  neighbor in the interior of $Q$, so we choose such a neighbor $x'$
  closest to $z_A$ along $Q$. 
  Note that by the minimality of~$Q$, no vertex in the interior of~$Q$
  has neighbor in the interior of~$P_A$ and in the interior of~$P_B$. 
  Since $y$ is not adjacent to~$x'$ (by Claim~\ref{cl:justy}), 
  $x'$ has no neighbors in $(P_A \cup P_B)\sm \{x\}$. 
  We set $R=xx'Qz_A$ and observe that $R$ has length at
  least~2.  If $a\neq a'$ and $aa'\notin E(G)$, then $xP_Aaz_A$,
  $xP_ByP_Aa'z_A$, and $R$ form a theta from $x$ to $z_A$.
  If $aa'\in E(G)$, then $P_A$, $P_B$,
  and $R$ form a pyramid.  Therefore $a=a'$. Note that 
  $xa \in E(G)$, for otherwise, $P_A$, $P_B$, and $R$ 
  form a theta. Hence, $a'x\in E(G)$.
  
  The proof for $b'x \in E(G)$ is similar. 
  \epc  \medskip

  To conclude the proof of~\ref{i:stuffOne}, set $P'_A= zQz_Aa'P_Ay$
  and $P'_B= zQz_Bb'P_By$.  By Claim~\ref{cl:apx}, $x$ has neighbors
  in the interior of both $P'_A$ and $P'_B$ (these neighbors are $a'$
  and $b'$). Note that since $a'x, b'x \in E(G)$, the
  interiors of $P_A$ and $P_B$ are included in the interiors of $P'_A$
  and $P'_B$ respectively.
  Hence, by Claim~\ref{c:indHyp}, every vertex of $C\sm z$
  has neighbors in the interior of both $P'_A$ and $P'_B$.

 Hence, the vertices $z,y$ and the paths
  $P'_A$ and $P'_B$ show that~\ref{i:stuffOne} is satisfied.

  \medskip

  Let us now prove~\ref{i:stuffTwo}.  Note that $(H, C\sm \{x, y\})$ is
  $(|C|-2)$-wheel (this follows because $G$ is theta-free, every vertex in
  $C\sm \{x, y\}$ has at least three neighbors in $H$). It remains to
  prove that it is a $(|C|-2)$-span-wheel.  Note that it is clearly
  true if $|C|\leq 3$.  We set $P_A = a_1\dots a_\alpha$ and
  $P_B = b_1 \dots b_\beta$ with $x=a_1=b_1$ and
  $y=a_\alpha = b_\beta$, as in the definition of a $k$-span-wheel. We
  just have to exhibit an ordering of the vertices of $C\sm\{x, y\}$
  that satisfies the rest of the definition.

  We first define $v_1$, the smallest vertex in the order we aim to
  construct. Note that no vertex $v\in C\sm \{x, y\}$ is adjacent to $x$ or $y$,
  because $C$ is an independent set. We let $v_1$ be a vertex of $C$
  that is adjacent to $a_i$ with $i$ minimum. Let $j$ be the smallest
  integer such that $v_1$ is adjacent to $b_j$.  We suppose that $v_1$
  is chosen subject to the minimality of $j$.  Let $i', j'$ be the
  greatest integers such that $v_1$ is adjacent to $a_{i'}$ and
  $b_{j'}$.  Note that $1 < i \leq i' < \alpha$  and $1 < j \leq j' < \beta$.

  \begin{claim}
    \label{cl:minv1}
    For every $w\in C\sm \{x, y, v_1\}$, we have
    $N_H(w) \subseteq V(a_{i'}P_A y P_Bb_{j'})$.
  \end{claim}

  \bpc We first note that the $2$-wheel $(H, \{v_1, w\})$ is not a
  cousin wheel, because this may happen only when $x \in N({v_1})$ or $y \in N({v_1})$
  (recall that if it was a cousin wheel, $N_H(v_1)$ would induce a 3-vertex path in $H$).

  Hence, $(H, \{v_1, w\})$ is a nested wheel. Suppose that
  $N_H(w) \not\subseteq V(a_{i'}P_Aa_\alpha) \cup V(b_{j'}P_Bb_\beta)$.
  This means that $w$ has a neighbor $z$ in $a_{i'-1}P_AxP_Bb_{j'-1}$.
  Since $(H, \{v_1, w\})$ is a nested wheel, $N_H(w)$ is contained
  in a $v_1$-sector $Q$ of $(H, v_1)$.  Moreover, since $w$ has a neighbor in the
  interior of both $P_A$ and $P_B$, we have $Q=a_{i}P_AxP_Bb_{j}$.
  Since $H$ and $w$ form a wheel, $w$ has neighbor in the interior of $Q$.
  This contradicts the minimality of $i$ or $j$. 
  \epc \medskip 

  The order of $C\sm \{x, y\}$ is now constructed as follows: we
  remove $v_1$ from $C$, define $v_2$ as we defined $v_1$
  (minimizing $i$, and then minimizing $j$), then remove $v_2$, define
  $v_3$, and so on. This iteratively constructs an ordering of
  $C\sm \{x, y\}$ showing that $(H, C\sm \{x, y\})$ is a
  $(|C|-2)$-span-wheel.
\end{proof}

For integers $t, k \geq 1$, the Ramsey number $R(t,k)$ is the smallest
integer $n$ such that any graph on $n$ vertices contains either a
clique of size $t$, or an independent set of size $k$.

\begin{theorem}
  \label{th:mainbound}
  An ($l$-span-wheel, $K_t$)-free graph $G\in \cal C$ has treewidth at
  most $(t-1) (R(t, l+2)-1)^3 - 1$.
\end{theorem}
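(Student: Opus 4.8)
The plan is to combine Theorem~\ref{th:PMC-Pilipczuk-thm} with the structural result of Lemma~\ref{th:stable-induces-wheel}. Theorem~\ref{th:PMC-Pilipczuk-thm} reduces the whole problem to bounding the size of minimal separators: if $G$ contains no $K_t$ and no minimal separator of size larger than some $s$, then its treewidth is at most $(t-1)s^3 - 1$. Since $G$ is already assumed to be $K_t$-free, the entire task is to produce a value of $s$ such that every minimal separator of $G$ has size at most $s$, and then plug in $k = t$ into the theorem. Comparing with the claimed bound $(t-1)(R(t, l+2)-1)^3 - 1$, the target is evidently $s = R(t, l+2) - 1$, so I must show that no minimal separator of $G$ can have size $R(t, l+2)$ or more.

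So the core step is the following: suppose for contradiction that $C$ is a minimal separator of $G$ with $|C| \geq R(t, l+2)$. By the definition of the Ramsey number, $G[C]$ contains either a clique of size $t$ or an independent set of size $l+2$. The clique of size $t$ is immediately ruled out because $G$ is $K_t$-free, so $C$ must contain an independent set $C'$ of size exactly $l+2$. The natural move is then to restrict attention to $C'$: I would argue that $C'$ is itself a minimal separator. Here I must be a little careful, since $C'$ is only an independent subset of the separator $C$, not obviously a separator in its own right. The clean way is to pick two components $A, B$ of $G \setminus C$ that are full to $C$ (these exist because $C$ is a minimal separator), and observe that since every vertex of $C$ — in particular every vertex of $C'$ — has a neighbor in both $A$ and $B$, one can directly apply Lemma~\ref{th:stable-induces-wheel} with the independent set $C'$ in place of $C$, provided $C'$ is genuinely a minimal separator with $A, B$ as full sides. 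If invoking Lemma~\ref{th:stable-induces-wheel} requires a bona-fide minimal separator, I would restrict $G$ to $G[A \cup C' \cup B]$, where $C'$ separates $A$ from $B$ and both are full, yielding a minimal separator of size $l+2$ to which the lemma applies.

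Once Lemma~\ref{th:stable-induces-wheel} is applied to the independent minimal separator $C'$ of size $l+2$, it produces two vertices $x, y \in C'$, a hole $H$ formed by paths $P_A$ (interior in $A$) and $P_B$ (interior in $B$), and most importantly an $(|C'|-2)$-span-wheel $(H, C' \setminus \{x, y\})$. Since $|C'| = l+2$, this is precisely an $l$-span-wheel in $G$. But $G$ is assumed to be $l$-span-wheel-free, a contradiction. This contradiction shows that no minimal separator of $G$ has size as large as $R(t, l+2)$, i.e.\ every minimal separator has size at most $R(t, l+2) - 1 =: s$.

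Finally I would assemble the pieces: with $k = t$ and $s = R(t, l+2) - 1$, the graph $G$ contains no $K_t$ (given) and no minimal separator of size larger than $s$ (just proved), so Theorem~\ref{th:PMC-Pilipczuk-thm} yields treewidth at most $(t-1)s^3 - 1 = (t-1)(R(t, l+2)-1)^3 - 1$, as claimed. The main obstacle I anticipate is the bookkeeping in the second paragraph: making sure that passing from the full separator $C$ to the independent subset $C'$ legitimately yields a minimal separator with two full-side components, so that the hypotheses of Lemma~\ref{th:stable-induces-wheel} are met exactly (in particular that $C'$ remains independent, that $A$ and $B$ stay full to $C'$ after any restriction, and that $|C'| \geq 2$ so the lemma is non-vacuous). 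The rest is a direct chain of implications with no delicate estimates.
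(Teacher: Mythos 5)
Your proposal is correct and follows essentially the same argument as the paper: apply Theorem~\ref{th:PMC-Pilipczuk-thm} with $k=t$ and $s = R(t,l+2)-1$, extract via Ramsey an independent set $C'$ of size $l+2$ from any too-large minimal separator, pass to the induced subgraph $G[A \cup C' \cup B]$ where $C'$ is a minimal separator with full sides $A$ and $B$, and invoke Lemma~\ref{th:stable-induces-wheel} to produce a forbidden $l$-span-wheel. The bookkeeping you flag (that $C'$ is a genuine independent minimal separator in the restricted graph) is handled in the paper exactly as you suggest, so there is no gap.
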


\begin{proof}
  Suppose for a contradiction that the treewidth of $G$ is at least
  \linebreak $(t-1) (R(t, l+2)-1)^3$.  Since $G$ is $K_t$-free, by
  Theorem~\ref{th:PMC-Pilipczuk-thm} $G$ admits a minimal separator
  $D$ of size at least $R(t,l+2)$.  Let $A$ and $B$ be two connected
  components of $G\sm D$ that are full to $D$.  By the definition of Ramsey number,
  $G[D]$ contains an independent set $C$ of size $l+2$.  We define
  $G'= G[A\cup C \cup B]$, and observe that $C$ is a minimal separator
  of $G'$.  Hence by Lemma~\ref{th:stable-induces-wheel} applied to
  $G'$, the graph contains an $l$-span-wheel, a contradiction.
\end{proof}

The following shows that in $\cal C$, an $l$-span-wheel with large $l$
contains $S_{i, j, k}$ with large $i, j, k$.

\begin{lemma}
  \label{l:4k}
  If a butterfly-free graph $G$ contains a $(4k+1)$-span-wheel with
  $k \geq 0$, then it contains $S_{k+1, k+1, k+1}$.
\end{lemma}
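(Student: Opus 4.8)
The plan is to locate, inside the $(4k+1)$-span-wheel $(H,C)$ with $C=\{v_1,\dots,v_{4k+1}\}$, a hub vertex carrying three long, pairwise anticomplete induced paths (the legs of the claw), each of length $k+1$. I use the notation from the definition of a span-wheel: $P_A=a_1\dots a_\alpha$, $P_B=b_1\dots b_\beta$, $x=a_1=b_1$, $y=a_\alpha=b_\beta$. For a center $v_m$ let $\ell^A_m$ and $r^A_m$ be the smallest and largest index $i$ with $v_ma_i\in E(G)$, and define $\ell^B_m,r^B_m$ analogously on $P_B$; all four indices exist and lie in the interiors of the paths, since every center has neighbors in both interiors and, as $C\cup\{x,y\}$ is independent, is nonadjacent to $x$ and $y$.

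The engine of the proof is a monotonicity estimate. Write $\lambda_m=\ell^A_m+\ell^B_m$ and $\rho_m=r^A_m+r^B_m$. The two span conditions give $r^A_m\le \ell^A_{m+1}$ and $r^B_m\le \ell^B_{m+1}$, hence $\lambda_{m+1}\ge\rho_m$. Moreover, since $v_m$ has at least one neighbor in each interior and at least three on $H$ in total, the number of its neighbors on a path is at most one more than the spread of their indices, so $\rho_m-\lambda_m\ge \deg_H(v_m)-2\ge1$. Combining, $\lambda_1<\lambda_2<\cdots$ and $\rho_1<\rho_2<\cdots$ strictly. I would then fix the middle center $v_c$ with $c=2k+1$. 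From $\lambda_c\ge\lambda_1+2k\ge 2k+4$ and $\rho_c\le\rho_{4k+1}-2k\le(\alpha+\beta)-2k-2$, two sectors of $v_c$ are forced to be long: the sector through $x$, namely the arc $a_{\ell^A_c}P_A\,x\,P_B b_{\ell^B_c}$, has length $\lambda_c-2\ge 2k+2$, and the sector through $y$ has length $(\alpha+\beta)-\rho_c\ge 2k+2$.

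Each of these long sectors yields a leg: starting at $v_c$ and walking $k+1$ steps into the sector from one of its ends gives an induced path of length $k+1$ whose internal vertices are anticomplete to $v_c$ (they are interior to the sector). Since the through-$x$ and through-$y$ sectors are distinct sectors of the wheel $(H,v_c)$ (which has at least three sectors because $v_c$ has at least three rim-neighbors), and the interiors of two distinct sectors are pairwise anticomplete, these two legs meet only in $v_c$. It therefore remains to produce a third leg: a third sector of $v_c$ of length at least $k+1$, distinct from the first two. The remaining sectors of $v_c$ are exactly the \emph{middle} sectors, each an arc contained in a single $P_A$ or $P_B$ lying between two consecutive same-path neighbors of $v_c$.

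This third leg is the crux. If some middle sector of $v_c$ has length at least $k+1$, the claw $S_{k+1,k+1,k+1}$ is complete. The hard case is when all middle sectors of $v_c$ are short, i.e.\ the neighbors of $v_c$ are clustered on each path; then a single-hub construction fails. My plan is to re-root: clustering together with the two span orderings forces the centers to accumulate at a common rim vertex $h$ while their neighbors on the opposite path spread out monotonically, and I would take $h$ as the hub and route the three legs through three well-separated centers, the monotone spread guaranteeing both the length $k+1$ and pairwise anticompleteness. It is precisely here that butterfly-freeness is indispensable: it forbids a center from having exactly two short, separated edges as its rim-neighborhood (the butterfly), which is the configuration that would otherwise let every center be clustered with no usable long sector anywhere; ruling it out is what reduces the clustered case to the monotone spread above. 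I expect the bookkeeping of this clustered, re-rooting case --- keeping three routed legs of length exactly $k+1$ pairwise anticomplete --- to be the main technical obstacle, and the reason the hypothesis is stated with $4k+1$ rather than fewer centers.
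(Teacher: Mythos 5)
Your first half is correct, and it is essentially the paper's own Claim~1 in disguise: your monotonicity estimate on $\lambda_m,\rho_m$ is exactly what shows that for the middle center $v:=v_{2k+1}$, both the sector through $x$ (call it $P$) and the sector through $y$ (call it $Q$) have length at least $2k+2$. The gap is in the second half. You assume that the claw must be centered at $v$ with its three legs going into three \emph{distinct} sectors of $(H,v)$, conclude that a third long sector is needed, and then try to rescue the ``clustered'' case by an unproved re-rooting plan. No third sector is needed: one long sector can host \emph{two} legs, entering from its two ends, because its length $2k+2$ keeps two paths with $k$ edges inside it at distance at least~$2$ from each other. Concretely, let $a,a'$ be the neighbors of $v$ on $P_A$ closest to $x$ and to $y$, and $b,b'$ the analogous vertices on $P_B$ (so $a,b$ are the ends of $P$ and $a',b'$ the ends of $Q$). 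Butterfly-freeness is used for exactly one thing: it forbids having $aa'\in E(G)$ and $bb'\in E(G)$ simultaneously, so up to swapping $P_A$ and $P_B$ either $a\neq a'$ and $aa'\notin E(G)$, or $a=a'$. In the first case the claw is centered at $v$ with legs $vaPu$, $vbPu''$ (both inside $P$, one from each end) and $va'Qw$; in the second case it is centered at $a$ with legs $aPu$, $aQw$ (along the rim into the two long sectors) and $avbPu''$ (through the center $v$, then along the rim inside $P$ from its other end $b$). Note that this case split also repairs a point you glossed over: even your two ``easy'' legs need their second vertices to be distinct and non-adjacent, which fails precisely when $a=a'$ or $aa'\in E(G)$ (and symmetrically on $P_B$); that, and not any ``clustering'' phenomenon, is what butterfly-freeness buys.

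The re-rooting plan itself rests on a false structural claim. Clustering plus the span ordering does \emph{not} force the centers to accumulate at a common rim vertex: nothing in the definition prevents, say, $N_H(v_m)=\{a_{2m},a_{2m+1},b_{m+1}\}$ for every $m$ (on a sufficiently long hole, with no other edges). There every middle sector has length~$1$, the graph is butterfly-free (every vertex has degree at most~$3$), yet no rim vertex sees more than one center, so there is no vertex $h$ adjacent to three centers at which your re-rooted claw could be hubbed. So the case you identify as the crux is genuinely reachable, and your proposed way out cannot handle it, whereas the two-legs-in-one-sector construction above handles it with no extra hypotheses. Finally, the $4k+1$ centers are not a reserve budget for re-rooting: they are already fully spent in your own first half, making \emph{both} outer sectors of the middle center have length at least $2k+2$ so that one of them can carry two legs.
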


\begin{proof}
  \setcounter{claim}{0} 
  Consider a $(4k+1)$-span-wheel in~$G$, with
  $x$, $y$, $P_A$, and $P_B$ be as in the definition of
  span-wheel given in the beginning of the current section.
  Let $v_1, \dots, v_{4k+1}$ be the centers of
  the span wheel.  For each $i=1, \dots, 4k+1$, let $a_i$ (resp.\
  $a'_i$) be the neighbor of $v_i$ in $P_A$ closest to $x$ (resp.\ to
  $y$) along $P_A$.  Let $b_i$ (resp.\ $b'_i$) be the neighbor of
  $v_i$ in $P_B$ closest to $x$ (resp.\ to $y$) along $P_B$.  We set
  $P_i=a_i P_A x P_B b_i$ and $Q_i=a'_i P_A y P_B b'_i$.

\begin{claim}
  \label{cl:lengthP}
  $P_i$ has length at least $i+1$ and $Q_i$ has length at least
  $4k+3-i$.
\end{claim}
 
\bpc We prove this by induction on $i$ for $P_i$.  It is clear that
$P_1$ has length at least 2 since $x$ is not adjacent to $v_1$.
Suppose the claim holds for some fixed $i\geq 1$, and let us prove it
for $i+1$.  From the induction hypothesis, $P_i$ has length at least
$i+1$, and since $v_i$ has a neighbor in the interior of $P_{i+1}$
(because it has at least three neighbors in $H$), the length of $P_{i+1}$ is
greater than the length of $P_i$, so $P_{i+1}$ has length at least
$i+2$.

The proof for $Q_i$ is similar, except we start by proving that
$Q_{4k+1}$ has length at least~2, and that the induction goes backward
down to $Q_1$.  \epc \medskip

We set $l=2k+1$. So, by Claim~\ref{cl:lengthP}, $P_l$ and $Q_l$ both
have length at least~$2k+2$.  We set $v=v_l$, $P=P_l$, $Q=Q_l$,
$a=a_l$, $a'=a'_l$, $b=b_l$ and $b'=b'_l$.
Since $G$ is butterfly-free, we do not have $aa'\in E(G)$ and
$bb'\in E(G)$ simultaneously.  
So, up to symmetry we may assume that either $a=a'$; or
$a\neq a'$ and $aa'\notin E(G)$.

  If $a=a'$, let $u$, $u'$, and $u''$  be three distinct vertices in $P$
  such that $a$, $u$, $u'$, and $u''$ appear in this order along $P$,
  $aPu$ has length $k+1$ and $bPu''$ has length  $k-1$ (which is possible 
  because $P$ has length at least $2k+1$).  Let $w$ be in
  $Q$ and such that $aQw$ has length $k+1$ (which is possible 
  because $Q$ has length at least $2k+1$).  The three paths $aPu$,
  $avbPu''$, and $aQw$ form an $S_{k+1, k+1, k+1}$. 

  If $a\neq a'$ and $aa'\notin E(G)$, then let $u$, $u'$, and $u''$ be
  three distinct vertices in $P$ such that $a$, $u$, $u'$, and $u''$
  appear in this order along $P$, $aPu$ has length $k$ and $bPu''$ has
  length $k$.  Let $w$ be in $Q$ and such that $a'Qw$ has length
  $k$.  The three paths $vaPu$, $vbPu''$ and $va'Qw$ form an
  $S_{k+1, k+1, k+1}$.
\end{proof}

\medskip

The following is a classical result on treewidth and we omit its
proof.

\begin{lemma}[\cite{lovasz:minor}]
  \label{lem:treewidth-atoms}
  The treewidth of a graph $G$ is the maximum treewidth of an induced
  subgraph of $G$ that has no clique separator. 
\end{lemma}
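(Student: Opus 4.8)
The plan is to prove the two inequalities separately, the first being immediate and the second by induction on $|V(G)|$ using the behaviour of treewidth under clique-sums. Write $H^\star$ for an induced subgraph of $G$ with no clique separator of maximum treewidth (an \emph{atom} of $G$). For the easy inequality $\tw(H^\star)\leq \tw(G)$, I would simply invoke monotonicity of treewidth under taking induced subgraphs: restricting any tree decomposition of $G$ to the vertex set of an induced subgraph yields a tree decomposition of that subgraph of no larger width.

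For the reverse inequality $\tw(G)\leq \tw(H^\star)$, I proceed by induction on $|V(G)|$. If $G$ has no clique separator, then $G$ is itself an atom and there is nothing to prove. Otherwise fix a clique separator $C$, let $D_1$ be one component of $G\sm C$ and $D_2$ the union of the others, and set $G_1=G[D_1\cup C]$ and $G_2=G[D_2\cup C]$. Then $V(G_1)\cap V(G_2)=C$ induces a clique and there are no edges between $D_1$ and $D_2$. The key step is the clique-sum identity
\[
  \tw(G)=\max(\tw(G_1),\tw(G_2)).
\]
The inequality $\tw(G)\geq \max(\tw(G_1),\tw(G_2))$ is again monotonicity, since $G_1,G_2$ are induced subgraphs of $G$. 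For the other direction I would take optimal tree decompositions $(T_1,\beta_1)$ and $(T_2,\beta_2)$ of $G_1$ and $G_2$; since $C$ is a clique, the Helly property of subtrees of a tree guarantees a node $t_1$ of $T_1$ with $C\seq\beta_1(t_1)$ and a node $t_2$ of $T_2$ with $C\seq\beta_2(t_2)$. Joining $T_1$ and $T_2$ by a new edge $t_1t_2$, keeping all bags unchanged, produces a tree decomposition of $G$ of width $\max(\tw(G_1),\tw(G_2))$: each edge of $G$ lies inside $G_1$ or inside $G_2$ (there being none across $C$), and each $w\in C$ occupies a connected subtree in $T_1$ containing $t_1$ and one in $T_2$ containing $t_2$, so its bags remain connected through the edge $t_1t_2$.

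It then remains to feed the identity into the induction. Both $G_1$ and $G_2$ are proper induced subgraphs of $G$, so by the induction hypothesis $\tw(G_r)$ equals the maximum treewidth of an atom of $G_r$ for $r\in\{1,2\}$. Here I would use that having no clique separator is an intrinsic property of a graph: any atom of $G_r$ is an induced subgraph of $G_r$, hence of $G$, and having no clique separator, so it is an atom of $G$ as well. Consequently $\tw(G_r)\leq\tw(H^\star)$, and combined with the clique-sum identity this gives $\tw(G)\leq\tw(H^\star)$, completing the induction. The main obstacle is the clique-sum identity, and within it the verification that the glued structure is a valid tree decomposition --- specifically checking that every vertex of $C$ still induces a connected subtree after joining $T_1$ and $T_2$, which is exactly where the choice of $t_1,t_2$ containing all of $C$ is needed.
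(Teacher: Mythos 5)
Your proof is correct, but note that there is no in-paper argument to compare it against: the paper explicitly omits the proof of this lemma, citing it as a classical result from Lov\'asz's survey. Your argument is the standard proof of that classical fact, and it is complete. The easy inequality $\tw(H^\star)\leq\tw(G)$ is indeed just monotonicity of treewidth under induced subgraphs. For the converse, the clique-sum identity $\tw(G)=\max(\tw(G_1),\tw(G_2))$ is established correctly: the pairwise intersection of the subtrees $\{T_w : w\in C\}$ needed for the Helly property follows from $C$ being a clique (every pair of its vertices shares a bag), the glued structure covers every edge of $G$ because $C$ separates $D_1$ from $D_2$ so no edge crosses, and connectivity for each $w\in C$ holds precisely because the chosen nodes $t_1,t_2$ have bags containing all of $C$. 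The induction is well-founded since $D_1$ and $D_2$ are both nonempty, making $G_1$ and $G_2$ proper induced subgraphs, and your observation that an atom of $G_r$ is automatically an atom of $G$ (having no clique separator is intrinsic, and induced subgraphs compose) is exactly what is needed to close the loop. The only cosmetic remark is that the degenerate cases (empty separator when $G$ is disconnected, one-vertex base case) are silently covered by your argument, which is fine.
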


\begin{theorem}
  \label{th:ttf}
  For $k\geq 1$, a (theta, triangle, $S_{k, k, k}$)-free graph $G$ has
  treewidth at most $2(R(3, 4k-1))^3-1$.
\end{theorem}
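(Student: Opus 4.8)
The plan is to reduce to clique-separator-free induced subgraphs and then funnel them through the machinery already assembled. By Lemma~\ref{lem:treewidth-atoms}, the treewidth of $G$ equals the maximum treewidth over all induced subgraphs $G'$ of $G$ that have no clique separator; since being (theta, triangle, $S_{k,k,k}$)-free is a hereditary property, every such $G'$ again lies in the class. So it suffices to bound the treewidth of an arbitrary such atom $G'$ by $2(R(3,4k-1))^3-1$.

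First I would dispose of the cube. If $G'$ contains the cube, then Lemma~\ref{lem:contain-cube} tells us that either $G'$ is the cube, or $G'$ has a clique separator of size at most~$2$. The latter is impossible because $G'$ has no clique separator at all; hence $G'$ is the cube, whose treewidth is~$3$. Since $k\geq 1$ gives $R(3,4k-1)\geq R(3,3)=6$, we have $2(R(3,4k-1))^3-1\geq 431\geq 3$, so this case is safely within the claimed bound. (For $k=1$ the cube even fails to be $S_{1,1,1}$-free, so the case is vacuous, but we need not invoke this.) In the remaining case $G'$ is (theta, triangle, cube)-free, and Lemma~\ref{l:inC} places $G'\in\mathcal C$; in particular $G'$ is butterfly-free and, being triangle-free, is $K_3$-free.

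Next I would certify that $G'$ contains no $(4k-3)$-span-wheel. Apply Lemma~\ref{l:4k} with its parameter set to $k-1$ (legitimate since $k\geq 1$): a butterfly-free graph containing a $(4(k-1)+1)=(4k-3)$-span-wheel must contain $S_{k,k,k}$. As $G'$ is butterfly-free and $S_{k,k,k}$-free, it follows that $G'$ is $(4k-3)$-span-wheel-free. Now Theorem~\ref{th:mainbound} applies to $G'\in\mathcal C$ with $t=3$ and $l=4k-3$, so $l+2=4k-1$, yielding treewidth at most $(t-1)(R(t,l+2)-1)^3-1 = 2(R(3,4k-1)-1)^3-1 \leq 2(R(3,4k-1))^3-1$. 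Taking the maximum over all atoms $G'$ then gives the bound for $G$.

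The main obstacle is not any single deep step but the bookkeeping that links the parameters correctly: one must match the off-by-one in Lemma~\ref{l:4k} (parameter $k-1$ producing a $(4k-3)$-span-wheel), feed $l=4k-3$ into Theorem~\ref{th:mainbound} so that the Ramsey argument reads $l+2=4k-1$, and check that the slightly sharper quantity $2(R(3,4k-1)-1)^3-1$ delivered by Theorem~\ref{th:mainbound} is dominated by the stated $2(R(3,4k-1))^3-1$. The only genuinely case-sensitive point is the cube reduction, where the absence of a clique separator is exactly what forces $G'$ to coincide with the cube rather than merely admitting a small clique cutset.
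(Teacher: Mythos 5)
Your proof is correct and follows essentially the same route as the paper's: reduce to clique-separator-free atoms via Lemma~\ref{lem:treewidth-atoms}, dispose of the cube via Lemma~\ref{lem:contain-cube}, place the graph in $\mathcal{C}$ via Lemma~\ref{l:inC}, exclude $(4k-3)$-span-wheels via Lemma~\ref{l:4k}, and conclude with Theorem~\ref{th:mainbound} for $t=3$, $l=4k-3$. Your parameter bookkeeping (including the observation that Theorem~\ref{th:mainbound} actually yields the slightly sharper $2(R(3,4k-1)-1)^3-1$) is, if anything, more explicit than the paper's.
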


\begin{proof}
  By Lemma~\ref{lem:treewidth-atoms}, it is enough to consider a graph $G$
  that does not have a clique separator.
  If $G$ contains a cube, then Lemma~\ref{lem:contain-cube}
  tells us that $G$ itself is the cube. 
  By classical results on treewidth, the treewidth of the cube is 3
  (but the trivial bound 8 would be enough for our purpose), which 
  in particular achieves the given bound.  
  We may therefore assume that $G$ is cube-free.
  Moreover, by Lemma~\ref{l:inC}, $G$ is in
  $\mathcal{C}$.  Since $G$ is $S_{k,k,k}$-free, by Lemma~\ref{l:4k}, 
  $G$ contains no $(4k-3)$-span-wheel. Moreover, $G$ contains no
  $K_3$ by assumption.  Hence, by Theorem~\ref{th:mainbound}, $G$ has
  treewidth at most $2(R(3, 4k-1))^3-1$.
\end{proof}

\begin{theorem}
  \label{th:ehf}
  For $k\geq 1$, an (even hole, pyramid,  $K_t$, $S_{k, k, k}$)-free graph $G$ has
  treewidth at most $(t-1)(R(t, 4k-1))^3-1$.
\end{theorem}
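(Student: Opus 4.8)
The plan is to mirror the proof of Theorem~\ref{th:ttf}, observing that the argument in the even-hole setting is actually cleaner: there is no analogue of the cube exception, so no passage to clique-separator-free induced subgraphs is required. First I would show that $G$ belongs to the class $\mathcal{C}$. Thetas, prisms, even wheels, and squares all contain even holes, so an even-hole-free graph is automatically (theta, prism, even wheel, square)-free; combined with the pyramid-free hypothesis, this makes $G$ a (theta, prism, pyramid, even wheel, square)-free graph, and Lemma~\ref{l:inC} then places $G$ in $\mathcal{C}$. Unlike in the proof of Theorem~\ref{th:ttf}, this step does not require invoking Lemma~\ref{lem:treewidth-atoms} to strip off clique separators, precisely because no exceptional graph such as the cube needs separate treatment here.

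Next I would convert the $S_{k,k,k}$-freeness into the absence of large span-wheels. Since $G \in \mathcal{C}$ it is butterfly-free, so Lemma~\ref{l:4k} is applicable: if $G$ contained a $(4(k-1)+1)$-span-wheel, that is, a $(4k-3)$-span-wheel, then it would contain $S_{(k-1)+1,(k-1)+1,(k-1)+1} = S_{k,k,k}$, contradicting the hypothesis. Hence $G$ is $(4k-3)$-span-wheel-free.

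Finally, $G$ is $K_t$-free by hypothesis and $(4k-3)$-span-wheel-free by the previous step, so Theorem~\ref{th:mainbound} applies with $l = 4k-3$ (so that $l+2 = 4k-1$). This yields that the treewidth of $G$ is at most $(t-1)\bigl(R(t, 4k-1)-1\bigr)^3 - 1$, which is bounded above by the claimed value $(t-1)\bigl(R(t, 4k-1)\bigr)^3 - 1$.

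At the level of this theorem there is essentially no obstacle, since all the genuine difficulty has been front-loaded into Lemma~\ref{th:stable-induces-wheel} (an independent minimal separator forces a span-wheel) and Theorem~\ref{th:mainbound}. The only points demanding care are the index bookkeeping that makes the Ramsey parameter come out as exactly $4k-1$, and the verification that the even-hole and pyramid hypotheses genuinely deposit $G$ inside $\mathcal{C}$; both are routine consequences of the lemmas already established.
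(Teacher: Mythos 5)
Your proposal is correct and follows essentially the same route as the paper's own proof: deduce (theta, prism, pyramid, even wheel, square)-freeness from the hypotheses, apply Lemma~\ref{l:inC} to place $G$ in $\mathcal{C}$, rule out $(4k-3)$-span-wheels via Lemma~\ref{l:4k}, and conclude with Theorem~\ref{th:mainbound}. Your bookkeeping is in fact slightly more careful than the paper's, since you note explicitly that Theorem~\ref{th:mainbound} gives the sharper bound $(t-1)\bigl(R(t,4k-1)-1\bigr)^3-1$, which is then relaxed to the stated one.
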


\begin{proof}
  Since all thetas, prisms, even wheels, and squares contain even holes,
  $G$ is (theta, prism,
  pyramid, even wheel, square)-free.  So, 
  by Lemma~\ref{l:inC}, $G$ is in~$\mathcal{C}$.  Since $G$ is $S_{k,k,k}$-free,
  by Lemma~\ref{l:4k}, $G$
  contains no $(4k-3)$-span-wheel. Moreover, $G$ contains no $K_t$ by assumption.  Hence,
  by Theorem~\ref{th:mainbound}, $G$ has treewidth at most
  $(t-1)(R(t, 4k-1))^3-1$.
\end{proof}

\section*{Acknowledgement}

Thanks to \'Edouard Bonnet, Zden\v ek Dvo\v r\'ak, Kristina V\v
uskovi\'c, and R\'emi Watrigant for useful discussions.
We are also grateful to two anonymous referees 
for careful reading of the paper and valuable suggestions and comments,
which have improved the presentation of the paper.

%\bibliographystyle{plain}
%\bibliography{../../Bibliographie/articles} 
%\end{document}

\end{document}